\newcommand{\halmos}{\rule{5pt}{5pt}}
\newcommand{\cE}{{\mathcal E}}
\numberwithin{equation}{section}
\newtheorem{prop}{\bf Proposition}[section]
\theoremstyle{definition}
\begin{document}
\title[Degenerations of Ruijsenaars-van Diejen operator]
{Degenerations of Ruijsenaars-van Diejen operator and $q$-Painlev\'e equations}
\author{Kouichi Takemura}
\address{Department of Mathematics, Faculty of Science and Engineering, Chuo University, 1-13-27 Kasuga, Bunkyo-ku Tokyo 112-8551, Japan}
\email{takemura@math.chuo-u.ac.jp}
\subjclass[2010]{39A13,33E17,33E10}
\keywords{Ruijsenaars system, degeneration, Painlev\'e equation, Heun equation}
\begin{abstract}
It is known that the Painlev\'e VI is obtained by connection preserving deformation of some linear differential equations, and the Heun equation is obtained by a specialization of the linear differential equations.
We inverstigate degenerations of the Ruijsenaars-van Diejen difference opearators and show difference analogues of the Painlev\'e-Heun correspondence.
\end{abstract}
\maketitle

\section{Introduction}

In this paper, we investigate $q$-difference equations that are generalisations of the Heun equation and the Painlev\'e VI equation. 

Heun's differential equation is given by 
\begin{align}
\frac{d^2y}{dz^2} +  \left( \frac{\gamma}{z}+\frac{\delta }{z-1}+\frac{\epsilon}{z-t}\right) \frac{dy}{dz} + \frac{\alpha \beta z -q}{z(z - 1)(z - t)} y= 0,
\label{eq:Heun}
\end{align}
with the condition $\gamma +\delta +\epsilon = \alpha +\beta +1$, and it is a standard form of Fuchsian differential equation with four singularities $\{ 0,1,t,\infty \}$.
Note that the Gauss hypergeometric equation is a standard form of Fuchsian differential equation with three singularities $\{ 0,1,\infty \}$.
The Heun equation has an accessory parameter $q$ which is independent from local exponents, although the hypergeometric equation does not have it.

It is known that the Heun equation admits an expression in terms of elliptic functions.
Let $\wp (x)$ be the Weierstrass elliptic function with basic periods $(2\omega _1 ,2\omega _3)$.
Put $\omega _2=-\omega_1 -\omega _3$, $\omega _0=0$ and $e_i=\wp (\omega _i)$ $(i=1,2,3)$.
By setting 
\begin{equation}
z= \frac{\wp (x) -e_1}{e_2 - e_1},\; t=\frac{e_3 -e_1}{e_2-e_1}
\end{equation}
and applying a gauge transformation,
we obtain an elliptical representation of Heun's differential equation (see \cite{TakS}):
\begin{align}
& \left(-\frac{d^2}{dx^2} + \sum_{i=0}^3 l_i(l_i+1)\wp (x+\omega_i) \right)f(x)= Ef(x).
\label{eq:Heunellip}
\end{align}
Here the coupling constants $l_0, \dots ,l_3$ correspond to the parameters $\alpha , \dots ,\epsilon $ in Eq.(\ref{eq:Heun}) and the eigenvalue $E$ corresponds to the accessory parameter $q$.

The Painlev\'e VI equation is a non-linear ordinary differential equation given by
\begin{align}
 & \quad \frac{d^2 \lambda }{dt^2}=  \frac{1}{2} \left( \frac{1}{\lambda }+\frac{1}{\lambda -1}+\frac{1}{\lambda -t} \right) \left( \frac{d\lambda }{dt} \right) ^2 -\left( \frac {1}{t} +\frac {1}{t-1} +\frac {1}{\lambda -t} \right)\frac{d\lambda  }{dt} \\
& \qquad \qquad +\frac{\lambda  (\lambda -1)(\lambda -t)}{t^2(t-1)^2}\left\{ \alpha + \beta \frac{t}{\lambda  ^2} +\gamma \frac{(t-1)}{(\lambda -1)^2} +\delta \frac{t(t-1)}{(\lambda -t)^2} \right\}. \nonumber
\end{align}
See \cite{IKSY} for a review of the Painlev\'e equations.
In particular, it is known that solutions of the Painlev\'e VI equation do not have movable singularities other than poles, that is called the Painlev\'e property.
Painlev\'e VI is also obtained by monodromy preserving deformation of the $2 \times 2$ Fuchsian system of equations with four singularities $\{ 0,1,t,\infty \}$.
The Fuchsian system of equations is equivalent to the following Fuchsian equation 
\begin{align}
& \frac{d^2y_1}{dz^2} + \left( \frac{1-\theta _0}{z}+\frac{1-\theta _1}{z-1}+\frac{1-\theta _t}{z-t}-\frac{1}{z-\lambda}  \right)  \frac{dy_1}{dz} \label{eq:linP6} \\
& \qquad  + \left( \frac{\kappa _1(\kappa _2 +1)}{z(z-1)}+\frac{\lambda (\lambda -1)\mu}{z(z-1)(z-\lambda)}-\frac{t (t -1)H}{z(z-1)(z-t)}  \right) y_1=0, \nonumber \\
& H=\frac{1}{t(t-1)}[ \lambda (\lambda -1) (\lambda -t)\mu ^2 -\{ \theta _0  (\lambda -1) (\lambda -t) \nonumber \\
& \qquad +\theta _1  \lambda (\lambda -t) +(\theta _t -1) \lambda (\lambda -1)\} \mu +\kappa _1 (\kappa _2 +1) (\lambda -t)]. \nonumber
\end{align}
Note that the singularity $z=\lambda $ is apparent, which follows from the equality for $H$.
The monodromy of the solution to Eq.(\ref{eq:linP6}) is preserved as the parameter $t$ varies, if there exist rational functions $a_1(z,t)$ and $a_2 (z,t)$ of the variable $z$ such that the equation 
\begin{equation}
\frac{\partial y}{\partial t}= a_1(z,t) y+ a_2 (z,t) \frac{\partial y}{\partial z} \label{eq:dydt}
\end{equation}
is compatible to Eq.(\ref{eq:linP6}) (see \cite{IKSY}).
It follows from a lengthy calculation that the compatibility condition is equivalent to 
\begin{align}
\frac{d\lambda }{dt} =\frac{\partial H}{\partial \mu}, \quad \frac{d\mu }{dt} =-\frac{\partial H}{\partial \lambda},
\end{align}
which is called the Painlev\'e VI system.
By eliminating $\mu $, we obtain the Painlev\'e VI equation.

Recall that Eq.(\ref{eq:linP6}) has five singularities $\{ 0,1,t,\infty ,\lambda \} $, and the singularity $z=\lambda $ is superfluous for the Heun equation.
By specializing the point $z=\lambda $ to regular singularities $\{ 0,1,t,\infty \} $, we may derive the Heun equation.
For example, by setting $\lambda = t$ in Eq.(\ref{eq:linP6}) we have
\begin{align}
& \frac{d^2y_1}{dz^2} + \left( \frac{1-\theta _0}{z}+\frac{1-\theta _1}{z-1}+\frac{-\theta _t}{z-t} \right)  \frac{dy_1}{dz}+  \frac{\kappa _1 (\kappa _2 +1) (z-t) +\theta _t  t (t -1) \mu }{z(z-1)(z-t)} y_1=0 .
\end{align}
Therefore the Heun equation is related with the Painlev\'e VI equation through the linear differential equation given by Eq.(\ref{eq:linP6}).
We can also obtain the Heun equation by other specializations, and they are related with the space of initial conditions (see \cite{TakMH}).
See also \cite{SL,TakHP} for other perspectives on relationship between the Heun equation and the Painlev\'e VI equation.
Note that the Painlev\'e VI equation also admits elliptical representations \cite{Man,Guz,TakHP,ZZ}, which were applied in various ways.

In this paper, we propose a difference analogue of the correspondence between the Heun equation and the Painlev\'e VI equation.

Sakai \cite{Sak} investigated difference analogue of the Painlev\'e equation by using structures of some algebraic surfaces which are generalisations of the space of initial conditions, and proposed a list of the equations.
There are three kinds of difference Painlev\'e equations, i.e. elliptic difference, $q$-difference (or multiplicative difference) and additive difference, and each difference equation is labelled by some affine root systems from its symmetry.
The $q$-difference Painlev\'e equations of types $E^{(1)}_7$, $E^{(1)}_6$ and $D^{(1)}_5$ are at issue in this paper. 

Before giving a difference analogue of the Heun equation, we discuss a multivariable generalization of the Heun equation.
The quantum Inozemtsev system of type $BC_N$ is a quantum mechanical $N$-particle system whose Hamiltonian is given by
\begin{align}
& H=-\sum_{j=1}^N\frac{\partial ^2}{\partial x_j^2}+2l(l+1)\sum_{1\leq j<k\leq N} \left( \wp (x_j-x_k) +\wp (x_j +x_k) \right) \\
& \quad  \quad  \quad  + \sum_{j=1}^N \sum _{i=0}^3 l_i(l_i+1) \wp(x_j +\omega_i). \nonumber 
\end{align}
It is a generalization of the Calogero-Moser-Sutherland model, and the Inozemtsev model of type $BC_N$ is quantum Liouville integrable, i.e.~there exist operators
$H_k= \sum_{j=1}^N \left( \frac{\partial }{\partial x_j} \right) ^{2k} + \mbox{(lower terms)}$ such that $[H, H_k]=0$ and $[H_{k_1}, H_{k_2}]=0$ $(k, k_1, k_2=2,\dots ,N)$.
By restricting to the case $N=1$, we recover the elliptical representation of Heun's equation (see Eq.(\ref{eq:Heunellip})).

A difference (relativistic) analogue of the Inozemtsev system of type $BC_N$ is known as the Ruijsenaars-van Diejen system \cite{vD,RuiN} (or the Ruijsenaars system of type $BC_N$), whose defining second order difference operator is given by
\begin{align}
& A (\mu ; x ) = \sum _{j=1}^N ( V_{j}^+ (x) \exp(\delta \partial_{x_j}) + V_{j}^- (x) \exp(-\delta \partial_{x_j})  )+ V_0 (x) , \label{eq:RvDintro} \\
&  V_{j}^{\pm} (x) = \frac{ \prod_{s=1}^8 \theta (\pm x_j + \mu _s) }{\theta (\pm 2 x_j) \theta (\pm 2 x_j +\delta ) } \prod_{k\neq j}\frac{\theta (\pm x_j+x_k +\kappa )\theta (\pm x_j-x_k +\kappa )}{\theta (\pm x_j+x_k )\theta (\pm x_j-x_k )} ,\nonumber 
\end{align}
where $\theta (x)$ is the theta function and we omit the expression of the function $V_0(x)$, instead we give another explicit expression in Eq.(\ref{eq:defeRvDN}).
Note that 
\begin{align}
& \exp(\pm \delta \partial_{x_j}) f(x_1, \dots ,x_j, \dots ,x_N) = f(x_1, \dots ,x_j \pm \delta , \dots ,x_N).
\end{align}
The system contains the parameters $\delta , \kappa , \mu _1 ,\dots ,\mu _8$.
By a suitable limit as $\delta \to 0$, we obtain the Hamiltonian of the Inozemtsev system \cite{vD,RuiN}.
It is known that commuting operators of the Ruijsenaars-van Diejen system exist as is the case of the Inozemtsev system \cite{KH}.
We may regard the Ruijsenaars-van Diejen operator with one variable as a difference analogue of the Heun equation.
It is known that the Ruijsenaars-van Diejen operator has $E_8$ spectral symmetry \cite{Rui15}.
On the other hand, the elliptic difference Painlev\'e equation admits $E_8^{(1)}$ symmetry \cite{Sak}.
We expect to clarify relationships between the one variable difference equation of Ruijsenaars-van Diejen type and the elliptic difference Painlev\'e equation.

In this paper we investigate degenerations of the Ruijsenaars-van Diejen operator and find correspondences with linear $q$-difference equations which are related with $q$-difference Painlev\'e equations.
We find that we can take degenerations of the Ruijsenaars-van Diejen operator of $N$ variables four times, although it seems that the first two were essentially obtained by van Diejen \cite{vD}.
The degenerations are still interesting in the setting of one variable.
By taking degeneration four times, we obtain the following $q$-difference operator $ A^{\langle 4 \rangle} (x)$:
\begin{align}
& A^{\langle 4 \rangle} (x) g(x) =  x^{-1} (x-h_1 q^{1/2} ) (x-h_2 q^{1/2} )  g(x/q) + x^{-1}  l_3 l_4 (x - l_1 q^{-1/2} ) (x - l_2 q^{-1/2} ) g(qx) \\
& \qquad -\{ (l_3 +l_4 ) x  + (l_1 l_2 l_3 l_4 h_1 h_2 )^{1/2} ( h_3^{1/2} + h_3^{-1/2} ) x^{-1} \} g(x) . \nonumber
\end{align}
Then we may regard the equation 
\begin{equation}
A^{\langle 4 \rangle} (x) g(x) =Eg(x) \;  (E: \mbox{ eigenvalue})
\label{eq:A4E}
\end{equation}
as a $q$-deformation of Heun equation (\ref{eq:Heun}).
On the other hand, Eq.(\ref{eq:A4E}) is obtained as a special case of the linear $q$-difference equation by Jimbo and Sakai \cite{JS} which is related with the $q$-Painlev\'e VI equation by the connection preserving deformation. 
Similarly the equations for eigenfunctions of the second degenerate operator and the third degenerate operator are also obtained as special cases of the linear $q$-difference equations obtained by Yamada \cite{Y} which are related with the $q$-Painlev\'e equations of type $E_6^{(1)}$ and type $E_7^{(1)}$.

This article is organized as follows.
In section \ref{sec:degenone}, we apply degeneration of the Ruijsenaars-van Diejen operator with one variable four times.
In section \ref{sec:JSY}, we review linear $q$-difference equations which are related with some $q$-Painlev\'e equations and obtain the degenerated Ruijsenaars-van Diejen operators with one variable by specializing the parameters.
In section \ref{sec:degen}, we extend the degeneration to the multivariable case.
In section \ref{sec:discuss}, we propose some problems related with results in this paper.

\section{Degeneration of Ruijsenaars-van Diejen operator with one variable} \label{sec:degenone}

\subsection{Ruijsenaars-van Diejen operator}

We describe the Ruijsenaars-van Diejen operator with one variable explicitly.
Let $a_{+}$, $a_-$ be complex numbers whose real parts are positive and $R_{\pm } (z)$ be the functions defined by
\begin{equation}
R_{\pm }(z)=\prod_{k=1}^{\infty}(1-q_{\pm }^{2k-1} e^{2\pi i z}) (1-q_{\pm }^{2k-1} e^{-2\pi i z} ),\; q_{\pm} =e^{-\pi a_{\pm}}. \label{eq:R+-}
\end{equation}
They are modified versions of theta functions with the half periods $1/2 $ and $i a_{\pm}/2$.
The Ruijsenaars-van Diejen operator of one variable is given by 
\begin{equation}
A_{+}(h;z) = V_{+}(h;z)\exp(-ia_{-}\partial_{z})+V_{+}(h;-z)\exp(ia_{-}\partial_{z}) +U_{+}(h;z),
\label{eq:defeRvD}
\end{equation}
where
\begin{align}
& V_{+}(h;z) = \frac{\prod_{n=1}^8 R_{+}(z-h_n-ia_{-}/2)}{R_{+}(2z+ia_{+}/2)R_{+}(2z-ia_{-}+ia_{+}/2)}, \\
& U_{+}(h;z) = \frac{\sum_{t=0}^3p_{t,+}(h)
[ \cE_{t,+}(\mu ;z)-\cE_{t,+}(\mu ;\omega_{t,+})]}{2R_{+}(\mu -ia_{+}/2)R_{+}(\mu  -ia_{-}-ia_{+}/2)}, \nonumber
\end{align}
and we are using
\begin{align}
& \omega_{0,+}=0,\ \ \omega_{1,+}=1/2,\ \ \omega_{2,+}=ia_{+}/2,\ \ \omega_{3,+}=-1/2-ia_{+}/2, \\
& p_{0,+}(h) = \prod_{n=1}^8 R_{+}(h_n),\; p_{2,+}(h) = e^{-2\pi a_{+}}\prod_{n=1}^8 e^{-i\pi h_n } R_{+}(h_n-ia_{+}/2), \nonumber \\
& p_{1,+}(h) = \prod_{n=1}^8 R_{+}(h_n-1/2),\;  p_{3,+}(h) = e^{-2\pi a_{+}}\prod_{n=1}^8 e^{i\pi h_n } R_{+}(h_n+1/2+ia_{+}/2), \nonumber \\
& \cE_{t,+}(\mu ;z) = 
\frac{R_{+}(z+\mu  -ia_+ /2 -ia_- /2 -\omega_{t,+})R_{+}(z-\mu  +ia_+ /2 +ia_- /2 -\omega_{t,+})}{R_{+}(z-ia_+ /2 -ia_- /2 -\omega_{t,+})R_{+}(z+ia_+ /2 +ia_- /2 -\omega_{t,+})}, \nonumber
\end{align}
$(t=0,1,2,3).$
We adapt the expression in \cite{Rui15}, which is slightly different from the one in \cite{Rui09} with an additive constant.
Note that the function $U_{+}(h;z) $ is independent from the parameter $\mu $ in the case of one variable $z$, which can be proved as the first part of Lemma 3.2 in \cite{RuiN}.
Hence the operator $A_{+}(h;z)$ is also independent from the parameter $\mu $.

We can obtain an elliptical representation of the Heun equation (\ref{eq:Heunellip}) from the equation $A_{+}(h;z) f(z)= E f(z)$ ($E$: eigenvalue) by taking a suitable limit as $a_- \to 0$. For details see \cite{RuiN}.

\subsection{First degeneration}
We are going to take a trigonometric limit $(q_+  \to 0)$ of the Ruijsenaars-van Diejen operator with one variable.
The function $R_+(z)$ satisfies
\begin{equation}
R_+ (z \mp ia_+ ) =-e^{\pi a_+} e^{\pm 2\pi i z} R_+ (z) \label{eq:R+period}
\end{equation} 
and we have the following expansion as $q_+ \to 0$ (or $a_+ \to +\infty  $):
\begin{align}
& R_{+}(z )= 1-(e^{2\pi i z}+ e^{-2\pi i z} )q_{+ }+ q_{+ }^2 +O(q_+ ^3), \label{eq:R+expand} \\
& R_{+}(z \pm ia_+/2 )= (1- e^{\mp 2\pi i z} ) (1-(e^{2\pi i z}+ e^{-2\pi i z} )q_{+ }^{2} +O(q_+ ^4)). \nonumber
\end{align}

We set $h_n= \tilde{h}_n -ia_+/2$.
Then the function $V_{+}(h;z) $ admits the following limit as $q_+ \to 0$:
\begin{align}
& V_{+}(h;z) \to V^{\langle 1 \rangle}  (h;z) = \frac{\prod_{n=1}^8 (1-e^{-2\pi i z}e^{2\pi i \tilde{h}_n }e^{-\pi a_{-}})}{(1-e^{-4 \pi i z })(1-e^{-4\pi i z}e^{-2\pi a_{-}})}. \label{eq:tVhz}
\end{align}
By considering the limit of the function $U_{+}(h;z) $ as $q_+ \to 0$, we have the following proposition.
\begin{prop} \label{prop:firstdeg}
Let $ A(h,q_+ ;z) $ be the Ruijsenaars-van Diejen operator defined in Eq.(\ref{eq:defeRvD}).
As $q_+ \to 0$, we have
\begin{equation}
\left( A(h,q_+; z) + \frac{\prod_{n=1}^8 e^{\pi i \tilde{h}_n } }{(1- e^{\pi a_- } )^2}q_+^{-2} +C \right) f(z) \to A^{\langle 1 \rangle}  (h;z) f(z)
\end{equation}
for any $f(z)$, where
\begin{equation}
A^{\langle 1 \rangle}  (h;z) = V^{\langle 1 \rangle}  (h;z)\exp(-ia_{-}\partial_{z})+ V^{\langle 1 \rangle} (h;-z)\exp(ia_{-}\partial_{z}) )+U^{\langle 1 \rangle} (h;z),
\end{equation}
$V^{\langle 1 \rangle} (h;z)$ was defined in Eq.(\ref{eq:tVhz}),
\begin{align}
& U^{\langle 1 \rangle} (h;z) = \frac{\prod_{n=1}^8 (e^{2 \pi i \tilde{h}_n }- 1)}{2(1-e^{2\pi i z}e^{\pi a_- })(1-e^{-2\pi i z}e^{\pi a_- })} + \frac{\prod_{n=1}^8 (e^{2 \pi i \tilde{h}_n } + 1) }{2(1+e^{2\pi i z}e^{\pi a_- })(1+e^{-2\pi i z}e^{\pi a_- })} \label{eq:Vbdeg1} \\
&  + e^{-\pi a_- } \prod_{n=1}^8 e^{\pi i \tilde{h}_n } \cdot \Big[ (e^{2\pi i z} +  e^{-2\pi i z} ) \sum _{n=1}^8  ( e^{2\pi i \tilde{h}_n }+ e^{-2\pi i \tilde{h}_n }) - (e^{\pi a_- }+e^{-\pi a_- } ) (e^{4\pi i z} + e^{-4\pi i z} ) \Big] , \nonumber
\end{align}
and
\begin{align}
& C= \prod_{n=1}^8 e^{\pi i \tilde{h}_n } \cdot \Big[ e^{-\pi a_- } (e^{\pi a_- }+e^{-\pi a_- }) \\
&  + \frac{\displaystyle 12 +  \sum _{1\leq n <n' \leq 8} (e^{2\pi i \tilde{h}_n } + e^{-2\pi i \tilde{h}_n })( e^{2\pi i \tilde{h}_{n'} } + e^{-2\pi i \tilde{h}_{n'} }  )  + \frac{1}{2} \Big\{ \prod_{n=1}^8 (e^{2\pi i \tilde{h}_n }- 1) +\prod_{n=1}^8 (e^{2\pi i \tilde{h}_n } +1 ) \Big\} }{(1- e^{\pi a_- } )^2}\Big] . \nonumber
\end{align} 
\end{prop}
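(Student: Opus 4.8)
The plan is to split $A(h,q_+;z)$ into its two shift-operator coefficients and its multiplication part $U_{+}(h;z)$, and to handle each separately. Since $A^{\langle 1\rangle}$ is built from $V^{\langle 1\rangle}$ and $U^{\langle 1\rangle}$ with the \emph{same} shifts $\exp(\mp ia_-\partial_z)$, the limit of the coefficients of the shift operators is already supplied by (\ref{eq:tVhz}), namely $V_{+}(h;\pm z)\to V^{\langle 1\rangle}(h;\pm z)$. Thus the whole statement reduces to proving the scalar asymptotics
\[
U_+(h;z)=-\frac{\prod_{n=1}^8 e^{\pi i\tilde h_n}}{(1-e^{\pi a_-})^2}q_+^{-2}-C+U^{\langle 1\rangle}(h;z)+o(1)\qquad(q_+\to0),
\]
with $U^{\langle 1\rangle}$ and $C$ as stated, which is where essentially all the work lies.

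First I would substitute $h_n=\tilde h_n-ia_+/2$ into the coefficients $p_{t,+}(h)$ and apply the quasi-periodicity (\ref{eq:R+period}) to the shifted $R_+$-factors. This shows that $p_{0,+},p_{1,+}$ remain bounded, with limits $\prod_n(e^{2\pi i\tilde h_n}-1)$ and $\prod_n(e^{2\pi i\tilde h_n}+1)$, whereas $p_{2,+}$ and $p_{3,+}$ each pick up an explicit factor $e^{2\pi a_+}=q_+^{-2}$, so that $p_{2,+},p_{3,+}=q_+^{-2}\prod_n e^{\pi i\tilde h_n}\,(1+O(q_+))$. The same use of (\ref{eq:R+period}) on $\cE_{2,+}(\mu;z)$ and $\cE_{3,+}(\mu;z)$, whose definitions carry $\omega_{2,+}=ia_+/2$ and $\omega_{3,+}=-1/2-ia_+/2$, cancels the naively divergent $e^{\pi a_+}$ between numerator and denominator and leaves the finite leading values $\cE_{2,+},\cE_{3,+}\to e^{2\pi i\mu}$. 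I would also record that $\cE_{t,+}(\mu;\omega_{t,+})$ is independent of $t$, all four reducing to a common ratio $\cE^{\ast}(\mu)$, so that the numerator of $U_+$ splits as $\sum_t p_{t,+}\,\cE_{t,+}(\mu;z)-\cE^{\ast}(\mu)\sum_t p_{t,+}$.

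Next I would insert the expansions (\ref{eq:R+expand}) everywhere and write $U_+=N/D$, where $D=2R_+(\mu-ia_+/2)R_+(\mu-ia_--ia_+/2)$ has a finite nonzero limit $D_0$ and, crucially, no $q_+^1$ term, so $D=D_0+D_2q_+^2+O(q_+^3)$. At order $q_+^{-2}$ only the leading parts of $p_{2,+}\cE_{2,+}+p_{3,+}\cE_{3,+}$ and of $\cE^{\ast}(p_{2,+}+p_{3,+})$ survive; substituting $\cE_{2,+},\cE_{3,+}\to e^{2\pi i\mu}$, $\cE^{\ast}\to(1-e^{2\pi i\mu}e^{\pi a_-})^2/(1-e^{\pi a_-})^2$, and $D_0=2(1-e^{2\pi i\mu})(1-e^{2\pi i\mu}e^{2\pi a_-})$, an elementary factorisation makes the $e^{2\pi i\mu}$–dependence cancel and collapses the coefficient to the $z$-independent constant $-\prod_n e^{\pi i\tilde h_n}/(1-e^{\pi a_-})^2$, exactly the divergent counterterm. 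Its $\mu$-independence is forced by the $\mu$-independence of $U_+$ (first part of Lemma 3.2 in \cite{RuiN}); indeed this $\mu$-independence can be used throughout, either as a decisive consistency check or, by fixing $\mu$ at a convenient value, as a computational simplification.

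Finally, for the finite part I would read off the $q_+^0$ coefficient of $N/D=\frac1{D_0}\bigl(N_0-\tfrac{D_2}{D_0}N_{-2}\bigr)+\cdots$. The contributions of $p_{0,+}$ and $p_{1,+}$ reproduce, through the poles of $\cE_{0,+},\cE_{1,+}$ at $q_+=0$ (denominators $(1-e^{2\pi iz}e^{\pi a_-})(1-e^{-2\pi iz}e^{\pi a_-})$ and its sign-reversed analogue), precisely the first two terms of $U^{\langle 1\rangle}$ in (\ref{eq:Vbdeg1}); the subleading ($q_+^2$) parts of $\cE_{2,+},\cE_{3,+}$, weighted by the $q_+^{-2}$ from $p_{2,+},p_{3,+}$, generate the Laurent polynomial in $e^{2\pi iz}$ forming the last term of $U^{\langle 1\rangle}$; and the residual $z$-independent pieces, together with the $-\tfrac{D_2}{D_0}N_{-2}$ denominator correction, assemble into the constant $C$. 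I expect the main obstacle to be exactly this last step: it is a lengthy second-order bookkeeping of the theta-function expansions in which one must simultaneously verify that the $q_+^{-1}$ coefficient vanishes identically (there being no $q_+^{-1}$ counterterm) and that all $\mu$-dependence cancels. The clean $q_+^{-2}$ result and the exact matching of the pole terms give reliable guidance for organising this calculation.
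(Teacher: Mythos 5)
Your proposal is correct and follows essentially the same route as the paper's proof: expansion of the $R_+$-factors via the quasi-periodicity (\ref{eq:R+period}) and (\ref{eq:R+expand}), leading-order treatment of the bounded pair $p_{0,+},p_{1,+}$, second-order treatment of the $q_+^{-2}$-divergent pair $p_{2,+},p_{3,+}$ with the odd powers of $q_+$ cancelling between $t=2$ and $t=3$, the common value of $\cE_{t,+}(\mu ;\omega_{t,+})$, cancellation of all $\mu$-dependence, and your explicit computation of the $q_+^{-2}$ coefficient matches the paper's counterterm exactly. The only (harmless) imprecision is attributing the whole Laurent-polynomial term of $U^{\langle 1 \rangle}$ to the $q_+^{2}$ parts of $\cE_{2,+},\cE_{3,+}$ weighted by $q_+^{-2}$: the piece $(e^{2\pi i z}+e^{-2\pi i z})\sum_{n=1}^8(e^{2\pi i \tilde{h}_n}+e^{-2\pi i \tilde{h}_n})$ actually arises from the cross terms of the $O(q_+)$ parts of $p_{2,+},p_{3,+}$ with the $O(q_+)$ parts of $\cE_{2,+},\cE_{3,+}$, which the full second-order bookkeeping you describe would in any case capture.
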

\begin{proof}
It follows from $R_{+}(z \pm ia_+/2 )= 1- e^{\mp 2\pi i z} +O(q_{+ }^{2})$ that  
\begin{align}
& p_{0,+}(h) \cE_{0,+}(\mu ;z) \\
& =  \frac{R_{+}(z+\mu  -ia_+ /2 -ia_- /2)R_{+}(z-\mu  +ia_+ /2 +ia_- /2 )}{R_{+}(z-ia_+ /2 -ia_- /2 )R_{+}(z+ia_+ /2 +ia_- /2)} \prod_{n=1}^8 R_{+}(\tilde{h}_n -ia_+/2) \nonumber \\
& =  \frac{(1-e^{2\pi i (z+\mu  -ia_- /2)})(1-e^{-2\pi i (z-\mu  +ia_- /2 )})}{(1-e^{2\pi i (z -ia_- /2)})(1-e^{-2 \pi i (z +ia_- /2 )})} \prod_{n=1}^8 (1- e^{2 \pi i \tilde{h}_n })+ O(q_+^2) \nonumber \\
& = \Big\{ e^{2\pi i \mu  } +\frac{(e^{2\pi i \mu  } -1)(e^{2\pi i \mu }e^{2 \pi a_- } -1)}{(1-e^{2\pi i z}e^{\pi a_- })(1-e^{-2\pi i z}e^{\pi a_- })} \Big\}  \prod_{n=1}^8  (e^{2 \pi i \tilde{h}_n }-1) + O(q_+^2) , \nonumber 
\end{align}
\begin{align}
& p_{1,+}(h) \cE_{1,+}(\mu ;z)  \\
& = \Big\{ e^{2\pi i \mu  } +\frac{(e^{2\pi i \mu  } -1)(e^{2\pi i \mu }e^{2 \pi a_- } -1)}{(1+e^{2\pi i z}e^{\pi a_- })(1+e^{-2\pi i z}e^{\pi a_- })} \Big\} \prod_{n=1}^8 (e^{2\pi i \tilde{h}_n }+ 1)  + O(q_+^2) ,\nonumber \\
& R_{+}(\mu -ia_{+}/2) R_{+}(\mu  -ia_{-}-ia_{+}/2)  = (1- e^{2\pi i \mu } )(1- e^{2\pi i \mu }e^{2\pi a_{-}} ) +O(q_+ ^2). \nonumber 
\end{align}
Hence
\begin{align} 
& \frac{p_{0,+}(h) (\cE_{0,+}(\mu ;z)- \cE_{0,+}(\mu ;\omega_{0,+}))}{2R_{+}(\mu  -ia_{+}/2)R_{+}(\mu  -ia_{-}-ia_{+}/2) } + \frac{p_{1,+}(h) (\cE_{1,+}(\mu ;z)- \cE_{1,+}(\mu ;\omega_{1,+}))}{2R_{+}(\mu  -ia_{+}/2)R_{+}(\mu  -ia_{-}-ia_{+}/2) } \\
& =  \Big\{ \frac{1}{2(1-e^{2\pi i z}e^{\pi a_- })(1-e^{-2\pi i z}e^{\pi a_- })} - \frac{1}{2(1-e^{\pi a_- })^2} \Big\} \prod_{n=1}^8 (e^{2 \pi i \tilde{h}_n }-1) \nonumber  \\
& +  \Big\{ \frac{1}{2(1+e^{2\pi i z}e^{\pi a_- })(1+e^{-2\pi i z}e^{\pi a_- })} - \frac{1}{2(1-e^{\pi a_- })^2} \Big\} \prod_{n=1}^8 (e^{2 \pi i \tilde{h}_n }+1) +O(q_+ ^2) . \nonumber 
\end{align}
Set
\begin{align}
& \tilde{p}(h)= 8+ \sum _{1\leq n <n' \leq 8} (e^{2\pi i \tilde{h}_n } + e^{-2\pi i \tilde{h}_n })( e^{2\pi i \tilde{h}_{n'} } + e^{-2\pi i \tilde{h}_{n'} }  ) .
\end{align}
It follows from Eqs.(\ref{eq:R+period}, \ref{eq:R+expand}) that 
\begin{align}
& p_{2,+}(h) = e^{-2\pi a_{+}} \prod_{n=1}^8 e^{-i\pi( \tilde{h}_n -ia_+/2)} R_{+}(\tilde{h}_n -ia_+) = e^{2\pi a_{+}} \prod_{n=1}^8 e^{ \pi i  \tilde{h}_n }R_{+}(\tilde{h}_n ) \\
& = e^{2\pi a_{+}} \prod_{n=1}^8 e^{ \pi i  \tilde{h}_n } \{ (1- q_+ (e^{2\pi i \tilde{h}_n } + e^{-2\pi i \tilde{h}_n } ) + q_+^2 +O(q_+^3) \} \nonumber  \\
& = q_+^{-2} \Big[ 1 - \sum _{n=1}^8 ( e^{2\pi i \tilde{h}_n }+ e^{-2\pi i \tilde{h}_n })q_+ + \tilde{p}(h) q_+^2 +O(q_+^3) \Big] \prod_{n=1}^8 e^{ \pi i  \tilde{h}_n }, \nonumber \\
& p_{3,+}(h)  = e^{2\pi a_{+}} \prod_{n=1}^8 e^{ \pi i  \tilde{h}_n }R_{+}(\tilde{h}_n +1/2) \nonumber \\
& = q_+^{-2} \Big[ 1 + \sum _{n=1}^8 ( e^{2\pi i \tilde{h}_n }+ e^{-2\pi i \tilde{h}_n }) q_+ +\tilde{p}(h) q_+^2  +O(q_+^3) \Big] \prod_{n=1}^8 e^{ \pi i  \tilde{h}_n } , \nonumber 
\end{align}
and we have
\begin{align}
& \cE_{2,+}(\mu ;z)  
= \frac{R_{+}(z+\mu -ia_- /2 -ia_{+})R_{+}(z-\mu  +ia_- /2)}{R_{+}(z-ia_- /2 -ia_{+})R_{+}(z+ia_- /2)} \\
& = e^{2\pi i \mu } \frac{R_{+}(z+\mu -ia_- /2 )R_{+}(z-\mu  +ia_- /2)}{R_{+}(z-ia_- /2 )R_{+}(z+ia_- /2)} \nonumber \\
& = e^{2\pi i \mu } \frac{(1-(e^{2\pi i (z+\mu -ia_- /2 )}+ e^{-2\pi i (z+\mu -ia_- /2)} )q_{+ } +q_+ ^2 +O(q_+ ^3))}{(1-(e^{2\pi i (z -ia_- /2 )}+ e^{-2\pi i (z -ia_- /2)} )q_{+ } +q_+ ^2 +O(q_+ ^3))} \cdot \nonumber \\
& \qquad \qquad \cdot \frac{(1-(e^{2\pi i (z-\mu +ia_- /2 )}+ e^{-2\pi i (z-\mu +ia_- /2)} )q_{+ } +q_+ ^2 +O(q_+ ^3))}{(1-(e^{2\pi i (z +ia_- /2 )}+ e^{-2\pi i (z+ia_- /2)} )q_{+ } +q_+ ^2 +O(q_+ ^3))} \nonumber \\
& = e^{2\pi i \mu  } [  1 - (e^{\pi i \mu  }  -e^{-\pi i \mu  }) (e^{\pi i \mu  } e^{\pi a_- } - e^{-\pi i \mu  }e^{-\pi a_- }) (e^{2\pi i z} +  e^{-2\pi i z} ) q_+  - \cE (\mu ;z) q_+ ^2+O(q_+ ^3)] , \nonumber \\
& \cE_{3,+}(\mu ;z) =\frac{R_{+}(z+\mu -ia_- /2 +1/2  )R_{+}(z-\mu  +ia_- /2 +1/2 + ia_{+} )}{R_{+}(z -ia_- /2 +1/2  )R_{+}(z+ia_- /2 +1/2 + ia_+ )} = \cE_{2,+}(\mu ;z + 1/2) \nonumber \\
& = e^{2\pi i \mu  } [  1 + (e^{\pi i \mu  }  -e^{-\pi i \mu  }) (e^{\pi i \mu  } e^{\pi a_- } - e^{-\pi i \mu  }e^{-\pi a_- }) (e^{2\pi i z} +  e^{-2\pi i z} ) q_+  - \cE (\mu ;z) q_+ ^2+O(q_+ ^3)] , \nonumber 
\end{align}
where
\begin{align}
& \cE (\mu ;z) =  (e^{\pi i \mu  }  -e^{-\pi i \mu  }) (e^{\pi i \mu  } e^{\pi a_- } - e^{-\pi i \mu  }e^{-\pi a_- })\{ (e^{4\pi i z} +  e^{-4\pi i z} )(e^{-\pi a_- }+e^{\pi a_- } ) \\
& - (e^{\pi i \mu  }  -e^{-\pi i \mu  }) (e^{\pi i \mu  } e^{\pi a_- } - e^{-\pi i \mu  }e^{-\pi a_- })\} . \nonumber 
\end{align}
Therefore
\begin{align}
& p_{2,+}(h) \cE_{2,+}(\mu ;z) + p_{3,+}(h) \cE_{3,+}(\mu ;z)  = 2 e^{2\pi i \mu  } \prod_{n=1}^8 e^{ \pi i  \tilde{h}_n } \cdot \Big[ q_+^{-2} + \tilde{p}(h) - \cE (\mu ;z) \\
& + (e^{\pi i \mu  }  -e^{-\pi i \mu  }) (e^{\pi i \mu  } e^{\pi a_- } - e^{-\pi i \mu  }e^{-\pi a_- }) (e^{2\pi i z} +  e^{-2\pi i z} ) \sum _{n=1}^8 ( e^{2\pi i \tilde{h}_n }+ e^{-2\pi i \tilde{h}_n })  \Big] + O(q_+) . \nonumber 
\end{align}
For $t=0,1,2,3$, we have
\begin{align}
& \cE_{t,+}(\mu ;\omega_{t,+}) = \frac{R_{+}(-\mu  +ia_- /2 + ia_{+} /2)^2}{R_{+}(ia_- /2 + ia_{+}/2 )^2}\\
& = \frac{ (1- e^{2\pi i \mu }e^{\pi a_- } )^2}{(1- e^{\pi a_- } )^2}  [1+2 (e^{\pi a_- } - e^{-\pi a_- } e^{-2\pi i \mu } )(1 - e^{2\pi i \mu })  q_{+ }^{2} +O(q_+ ^4)] .  \nonumber 
\end{align}
Then
\begin{align}
& p_{2,+}(h) \cE_{2,+}(\mu ;\omega_{2,+}) + p_{3,+}(h) \cE_{3,+}(\mu ;\omega_{3,+}) \\ 
& = 2 \frac{ (1- e^{2\pi i \mu }e^{\pi a_- } )^2}{(1- e^{\pi a_- } )^2}  \prod_{n=1}^8 e^{ \pi i  \tilde{h}_n } \cdot [ q_+^{-2}  +\tilde{p}(h) + 2 (e^{\pi a_- } - e^{-\pi a_- } e^{-2\pi i \mu } )(1 - e^{2\pi i \mu })   +O(q_+)] . \nonumber 
\end{align}
By combining with 
\begin{align}
& R_{+}(\mu -ia_{+}/2) R_{+}(\mu  -ia_{-}-ia_{+}/2) \\
& = (1- e^{2\pi i \mu } )(1- e^{2\pi i \mu }e^{2\pi a_{-}} ) (1-(e^{2\pi a_{-}} e^{2\pi i \mu }+ e^{-2\pi a_{-}} e^{-2\pi i \mu } +e^{2\pi i \mu }+ e^{-2\pi i \mu } )q_{+ }^{2} +O(q_+ ^4)) ,  \nonumber 
\end{align}
we have
\begin{align}
& \frac{p_{2,+}(h) (\cE_{2,+}(\mu ;z)- \cE_{2,+}(\mu ;\omega_{2,+})) + p_{3,+}(h) ( \cE_{3,+}(\mu ;z) -\cE_{3,+}(\mu ;\omega_{3,+}))}{2R_{+}(\mu -ia_{+}/2) R_{+}(\mu  -ia_{-}-ia_{+}/2) } \\ 
& = \prod_{n=1}^8 e^{ \pi i  \tilde{h}_n } \cdot \Big[ e^{-\pi a_- } \Big\{ (e^{2\pi i z} +  e^{-2\pi i z} )\sum _{n=1}^8 ( e^{2\pi i \tilde{h}_n }+ e^{-2\pi i \tilde{h}_n }) 
 - (e^{-\pi a_- }+e^{\pi a_- } ) (e^{4\pi i z} +  e^{-4\pi i z} ) \Big\}  \nonumber \\
&  - \frac{q_+^{-2}}{(1- e^{\pi a_- } )^2} - \frac{ \tilde{p}(h) +4}{(1- e^{\pi a_- } )^2} - e^{-\pi a_- } (e^{-\pi a_- }+e^{\pi a_- }) + O(q_+) \Big] .  \nonumber 
\end{align} 
Therefore
\begin{align}
& \frac{\sum _{k=0}^3 p_{k,+}(h) (\cE_{k,+}(\mu ;z)- \cE_{k,+}(\mu ;\omega_{k,+})) }{2R_{+}(\mu -ia_{+}/2) R_{+}(\mu  -ia_{-}-ia_{+}/2) } \\ 
& = \frac{\prod_{n=1}^8 e^{i\pi \tilde{h}_n} (e^{\pi i \tilde{h}_n }- e^{-\pi i \tilde{h}_n }) }{2(1-e^{2\pi i z}e^{\pi a_- })(1-e^{-2\pi i z}e^{\pi a_- })}+ \frac{\prod_{n=1}^8 e^{i\pi \tilde{h}_n} (e^{\pi i \tilde{h}_n }+ e^{-\pi i \tilde{h}_n }) }{2(1+e^{2\pi i z}e^{\pi a_- })(1+e^{-2\pi i z}e^{\pi a_- })}  \nonumber  \\
& + \prod_{n=1}^8 e^{ \pi i  \tilde{h}_n } \Big[ e^{-\pi a_- } (e^{2\pi i z} +  e^{-2\pi i z} )  \sum _{n=1}^8 ( e^{2\pi i \tilde{h}_n }+ e^{-2\pi i \tilde{h}_n })  -  e^{-\pi a_- } (e^{-\pi a_- }+e^{\pi a_- } )(e^{4\pi i z} +  e^{-4\pi i z} ) \nonumber \\
&  - \frac{q_+^{-2}}{(1- e^{\pi a_- } )^2}  - e^{-\pi a_- } (e^{-\pi a_- }+e^{\pi a_- })  \nonumber \\
&  - \frac{\displaystyle 12 +  \sum _{1\leq n <n' \leq 8} (e^{2\pi i \tilde{h}_n } + e^{-2\pi i \tilde{h}_n })( e^{2\pi i \tilde{h}_{n'} } + e^{-2\pi i \tilde{h}_{n'} }  )  + \frac{1}{2} \Big\{ \prod_{n=1}^8 (e^{2\pi i \tilde{h}_n }- 1) +\prod_{n=1}^8 (e^{2\pi i \tilde{h}_n } +1 ) \Big\} }{(1- e^{\pi a_- } )^2} \Big] .  \nonumber 
\end{align} 
\end{proof}
Note that the operator $A^{\langle 1 \rangle}  (h;z) $ does not contain the parameter $\mu $, and it is not surprising because the Ruijsenaars-van Diejen operator $ A(h,q_+ ;z) $ is independent from the parameter $\mu $ despite it appears in the expression.

To obtain the second degeneration in section \ref{subsec:secdeg}, we apply a gauge transformation to the operator in Proposition \ref{prop:firstdeg} by using the function $R_-(z)$ defined in Eq.(\ref{eq:R+-}), which satisfies
\begin{equation}
R_- (z \mp ia_- ) =-e^{\pi a_-} e^{\pm 2\pi i z} R_- (z).
\end{equation} 
By the gauge transformation 
\begin{equation}
\tilde{A}^{\langle 1 \rangle} (h,z)  = R_- (z)^{-2} \circ A^{\langle 1 \rangle} (h,z) \circ  R_- (z)^{2} ,
\end{equation}
we have the following operator:
\begin{equation}
\tilde{A}^{\langle 1 \rangle} (h;z) = \tilde{V}^{\langle 1 \rangle} (h;z)\exp(-ia_{-}\partial_{z})+\tilde{W}^{\langle 1 \rangle}(h;z)\exp(ia_{-}\partial_{z}) + U^{\langle 1 \rangle} (h;z),
\label{eq:N1E8}
\end{equation}
where $U^{\langle 1 \rangle} (h;z)$ was defined in Eq.(\ref{eq:Vbdeg1}) and
\begin{align}
& \tilde{V}^{\langle 1 \rangle} (h;z) = \frac{\prod_{n=1}^8 (1-e^{-2\pi i z}e^{2\pi i \tilde{h}_n }e^{-\pi a_{-}})}{e^{-2\pi a_{-}} e^{-4 \pi i z }(1-e^{-4 \pi i z })(1-e^{-4\pi i z}e^{-2\pi a_{-}})} , \\
& \tilde{W}^{\langle 1 \rangle} (h;z) = \frac{\prod_{n=1}^8 (1-e^{2\pi i z}e^{2\pi i \tilde{h}_n }e^{-\pi a_{-}})}{e^{-2\pi a_{-}} e^{4 \pi i z}(1-e^{4 \pi i z})(1-e^{4\pi i z}e^{-2\pi a_{-}})} . \nonumber 
\end{align}
Note that this operator was essentially obtained by van Diejen \cite{vD} in the multivariable case.
\subsection{Second degeneration} \label{subsec:secdeg}

We investigate a degeneration of the operator given by Eq.(\ref{eq:N1E8}).
\begin{prop} \label{prop:secdeg}
In Eq.(\ref{eq:N1E8}), we replace $z$ by $z+iR$, $\tilde{h}_n $ $(n=1,2,3,4)$ by $h_n +iR$, $\tilde{h}_n $ $(n=5,6,7,8)$ by $h_n -iR$ and take the limit $R\to +\infty $.
Then we arrive at the operator
\begin{equation}
A^{\langle 2 \rangle} (h;z) =  V^{\langle 2 \rangle} (h;z)\exp(-ia_{-}\partial_{z})+W^{\langle 2 \rangle} (h;z)\exp(ia_{-}\partial_{z}) +U^{\langle 2 \rangle} (h;z),
\end{equation}
where
\begin{align}
& V^{\langle 2 \rangle} (h;z) = e^{4\pi i  z} \prod_{n=1}^4 (1-e^{-2\pi i  z}e^{2\pi i h_n } e^{-\pi a_{-}})\prod_{n=5}^8 e^{2\pi i h_n} , \\
& W^{\langle 2 \rangle} (h;z) = e^{2\pi a_{-}} e^{-4\pi i z} \prod_{n=5}^8 (1-e^{2\pi i z}e^{2\pi i h_n} e^{-\pi a_{-}}) , \nonumber \\
& U^{\langle 2 \rangle} (h;z) = \prod_{n=5}^8 e^{2\pi i h_n } \Big[ \Big(\sum _{n=1}^4 e^{2\pi i h_n }+\sum _{n=5}^8 e^{-2\pi i h_n } \Big) e^{-\pi a_- } e^{2\pi i z} -(1+e^{-2\pi a_- })e^{4\pi i z} \Big] \nonumber \\
& \qquad + \prod_{n=1}^8 e^{\pi i h_n } \Big[ \Big(\sum _{n=1}^4  e^{-2\pi i h_n } + \sum _{n=5}^8  e^{2\pi i h_n } \Big) e^{-\pi a_- }  e^{-2\pi i z} - (1+e^{-2\pi a_- } ) e^{-4\pi i z} \Big] .\nonumber 
\end{align}
Namely we have
\begin{equation}
e^{-4\pi R} \tilde{A}^{\langle 1 \rangle} (h +iR v ;z +iR) f(z) \to A^{\langle 2 \rangle} (h;z) f(z)
\end{equation}
as $R\to +\infty $ for any $f(z)$, where $v = (1,1,1,1,-1,-1,-1,-1)$.
\end{prop}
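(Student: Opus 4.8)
The plan is to note first that $z\mapsto z+iR$ is a constant shift of the argument, so it commutes with the difference operators $\exp(\pm ia_-\partial_z)$; consequently $\tilde{A}^{\langle 1 \rangle}(h+iRv;z+iR)$ is, by definition, the operator $\tilde{V}^{\langle 1 \rangle}\exp(-ia_-\partial_z)+\tilde{W}^{\langle 1 \rangle}\exp(ia_-\partial_z)+U^{\langle 1 \rangle}$ of Eq.(\ref{eq:N1E8}) with each coefficient function taken after the replacements $z\mapsto z+iR$ and $\tilde{h}_n\mapsto h_n\pm iR$. Since the shift amounts $\pm ia_-$ applied to $f$ carry no $R$-dependence, the proposition reduces to the three termwise limits
\[
e^{-4\pi R}\tilde{V}^{\langle 1 \rangle}\to V^{\langle 2 \rangle}(h;z),\qquad e^{-4\pi R}\tilde{W}^{\langle 1 \rangle}\to W^{\langle 2 \rangle}(h;z),\qquad e^{-4\pi R}U^{\langle 1 \rangle}\to U^{\langle 2 \rangle}(h;z),
\]
the coefficients being evaluated after the substitution.

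Throughout I would set $w=e^{2\pi R}\to+\infty$ and use the rules $e^{\pm 2\pi i z}\mapsto e^{\pm 2\pi i z}w^{\mp 1}$, together with $e^{2\pi i \tilde{h}_n}\mapsto e^{2\pi i h_n}w^{-1}$ for $n=1,2,3,4$ and $e^{2\pi i \tilde{h}_n}\mapsto e^{2\pi i h_n}w$ for $n=5,6,7,8$. A useful preliminary remark is that $\prod_{n=1}^8 e^{\pi i \tilde{h}_n}=\prod_{n=1}^8 e^{\pi i h_n}$ is independent of $R$, since the four factors with $n\le 4$ contribute $e^{-4\pi R}$ and the four with $n\ge 5$ contribute $e^{4\pi R}$. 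For the two shift coefficients the limits are then routine, because each is a single rational expression in which only the leading power of $w$ survives: in $\tilde{V}^{\langle 1 \rangle}$ the numerator grows like $w^8$ (the four factors with $n\ge 5$ each like $w^2$) and the denominator like $w^6$, while in $\tilde{W}^{\langle 1 \rangle}$ the numerator tends to a finite limit and the denominator vanishes like $w^{-2}$. In both cases the net growth is $w^2=e^{4\pi R}$, and replacing each $1-(\text{large})$ by its dominant term and each $1-(\text{small})$ by $1$ produces exactly $e^{4\pi R}V^{\langle 2 \rangle}(h;z)$ and $e^{4\pi R}W^{\langle 2 \rangle}(h;z)$.

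The potential $U^{\langle 1 \rangle}$ splits into its last (polynomial) line and its first two rational terms. In the polynomial line the prefactor $\prod e^{\pi i \tilde{h}_n}$ is $R$-invariant, and the bracket is dominated by the $w$-parts of $e^{-2\pi i z}$, of $e^{-4\pi i z}$, and of $\sum_{n}(e^{2\pi i \tilde{h}_n}+e^{-2\pi i \tilde{h}_n})$, so that it grows like $w^2$; after multiplying by $e^{-4\pi R}$ this yields precisely the second line of $U^{\langle 2 \rangle}(h;z)$.

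The delicate point, and the main obstacle, is the first two rational terms: each grows like $e^{6\pi R}$, one order faster than the target, so their leading parts must cancel. To expose this I would expand
\[
\prod_{n=1}^8(e^{2\pi i \tilde{h}_n}\mp 1)=w^4\prod_{n=5}^8 e^{2\pi i h_n}\Big(1\mp\tfrac{1}{w}S+O(w^{-2})\Big),\qquad S=\sum_{n=1}^4 e^{2\pi i h_n}+\sum_{n=5}^8 e^{-2\pi i h_n},
\]
and expand the denominators $2(1\mp e^{2\pi i z}e^{\pi a_-})(1\mp e^{-2\pi i z}e^{\pi a_-})$ to the same subleading order in $w$. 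Adding the two resulting Laurent expansions, the $O(w^3)$ (that is, $O(e^{6\pi R})$) contributions cancel identically, and the $O(w^2)$ contributions combine into $e^{4\pi R}$ times the first line of $U^{\langle 2 \rangle}(h;z)$. Multiplying each piece by $e^{-4\pi R}$ and summing the three limits then gives $A^{\langle 2 \rangle}(h;z)$. The only genuinely non-routine bookkeeping is this subleading cancellation and the check that the surviving $O(w^2)$ coefficient reproduces $\prod_{n=5}^8 e^{2\pi i h_n}\big[(\sum_{n=1}^4 e^{2\pi i h_n}+\sum_{n=5}^8 e^{-2\pi i h_n})e^{-\pi a_-}e^{2\pi i z}-(1+e^{-2\pi a_-})e^{4\pi i z}\big]$; everything else is an immediate leading-order limit.
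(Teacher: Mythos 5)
Your proposal is correct and follows essentially the same route as the paper's proof: direct substitution of $z\mapsto z+iR$, $\tilde h_n\mapsto h_n\pm iR$ into the three coefficient functions, leading-order asymptotics for $\tilde V^{\langle 1\rangle}$, $\tilde W^{\langle 1\rangle}$ and the polynomial part of $U^{\langle 1\rangle}$, and for the two rational terms of $U^{\langle 1\rangle}$ the observation that their $e^{6\pi R}$ leading parts cancel while the $e^{4\pi R}$ subleading parts combine into the first line of $U^{\langle 2\rangle}$ (the paper records this cancellation as ``$\sim 0\cdot e^{6\pi R}+e^{4\pi R}[\cdots]$''). Your expansion of numerators and denominators to order $w^{-1}$ is exactly the computation the paper carries out, merely written with the bookkeeping variable $w=e^{2\pi R}$ made explicit.
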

\begin{proof}
We define the equivalence $a\sim b$ by $\lim _{R \to +\infty} a/b =1$.
Then
\begin{align}
& \tilde{V}^{\langle 1 \rangle} (h +iRv ;z +iR ) \\
& = \frac{\prod_{n=1}^4 (1-e^{-2\pi i z}e^{2\pi i h_n } e^{-\pi a_{-}})\prod_{n=5}^8 (1-e^{-2\pi i z}e^{4\pi R }e^{2\pi i h_n} e^{-\pi a_{-}})}{e^{-2\pi a_{-}} e^{-4\pi i z}e^{4\pi R}(1-e^{-4\pi i z}e^{4\pi R})(1-e^{-4\pi i z}e^{4\pi R}e^{-2\pi a_{-}})} \nonumber \\
& \sim  e^{4\pi i z}e^{4\pi R } \prod_{n=1}^4 (1-e^{-2\pi i z}e^{2\pi i h_n } e^{-\pi a_{-}})\prod_{n=5}^8 e^{2\pi i h_n} ,\nonumber 
\end{align}
\begin{align}
& \tilde{W}^{\langle 1 \rangle} (h +iRv ;z +iR ) \\
& = \frac{\prod_{n=1}^4 (1-e^{2\pi i z}e^{2\pi i h_n }e^{-4\pi R} e^{-\pi a_{-}})\prod_{n=5}^8 (1-e^{2\pi i z}e^{2\pi i h_n} e^{-\pi a_{-}})}{e^{-2\pi a_{-}} e^{4\pi i z}e^{-4\pi R}(1-e^{4\pi i z}e^{-4\pi R})(1-e^{4\pi i z}e^{-4\pi R}e^{-2\pi a_{-}})} \nonumber \\
& \sim e^{2\pi a_{-}} e^{-4\pi i z}e^{4\pi R} \prod_{n=5}^8 (1-e^{2\pi i z}e^{2\pi i h_n} e^{-\pi a_{-}}) , \nonumber 
\end{align}
\begin{align}
& \tilde{U}^{\langle 1 \rangle} (h +iRv ;z +iR ) \\
& = -e^{6\pi R} e^{-\pi a_- }e^{2\pi i z}  \prod_{n=5}^8 e^{2\pi i h_n } \frac{\prod_{n=1}^4 (1- e^{2\pi i h_n }e^{-2\pi R})\prod_{n=5}^8 (1- e^{-2\pi R} e^{-2\pi i h_n })}{2(1-e^{2\pi i z}e^{-2\pi R }e^{\pi a_- })(1-e^{-\pi a_- }e^{2\pi i z}e^{-2\pi R } )} \nonumber \\
& +e^{6\pi R} e^{-\pi a_- }e^{2\pi i z}  \prod_{n=5}^8 e^{2\pi i h_n } \frac{\prod_{n=1}^4 (1+ e^{2\pi i h_n }e^{-2\pi R})\prod_{n=5}^8 (1+ e^{-2\pi R} e^{-2\pi i h_n })}{2(1+e^{2\pi i z}e^{-2\pi R }e^{\pi a_- })(1+e^{-\pi a_- }e^{2\pi i z}e^{-2\pi R } )}  \nonumber \\
& + e^{-\pi a_- } \prod _{n=1}^8 e^{\pi i h_n } \Big[ \Big\{ \sum _{n=1}^4 ( e^{2\pi i h_n }e^{-2\pi R}+ e^{-2\pi i h_n } e^{2\pi R}) + \sum _{n=5}^8 ( e^{2\pi i h_n }e^{2\pi R}+ e^{-2\pi i h_n }e^{-2\pi R}) \Big\} \nonumber \\
&  \cdot (e^{2\pi i z}e^{-2\pi R} + e^{-2\pi i z}e^{2\pi R} ) - (e^{-\pi a_- }+e^{\pi a_- } )(e^{4\pi i z}e^{-4\pi R} + e^{-4\pi i z}e^{4\pi R} ) \Big] \nonumber \\
& \sim 0 \cdot e^{6\pi R} +  e^{4\pi R} \prod_{n=5}^8 e^{2\pi i h_n } \Big\{ \Big( \sum _{n=1}^4 e^{2\pi i h_n }+\sum _{n=5}^8 e^{-2\pi i h_n }\Big) e^{-\pi a_- }e^{2\pi i z} -(1+e^{-2\pi a_- })e^{4\pi i z} \Big\} \nonumber \\
& + e^{4\pi R} e^{-\pi a_- } \prod _{n=1}^8 e^{\pi i h_n } \Big\{ \Big( \sum _{n=1}^4  e^{-2\pi i h_n } + \sum _{n=5}^8  e^{2\pi i h_n } \Big) e^{-2\pi i z} - (e^{-\pi a_- }+e^{\pi a_- } ) e^{-4\pi i z} \Big\} .\nonumber 
\end{align}
Thus the proposition is obtained.
\end{proof}
Set $l_{n}= -h_{n+4}$ $(n=1,2,3,4) $.
By the multiplication and the gauge transformation given by
\begin{equation}
\tilde{A}^{\langle 2 \rangle} (h,l;z)  = e^{\pi a_- } \prod_{n=5}^8 e^{-2\pi i h_n}  \cdot e^{\pi i z} \circ A^{\langle 2 \rangle} (h,z) \circ e^{-\pi i z}, 
\end{equation}
we have the following operator:
\begin{align}
\tilde{A}^{\langle 2 \rangle} (h,l;z) =  \tilde{V}^{\langle 2 \rangle} (h;z)\exp(-ia_{-}\partial_{z})+\tilde{W}^{\langle 2 \rangle} (l;z)\exp(ia_{-}\partial_{z}) + \tilde{U}^{\langle 2 \rangle} (h,l;z),
\label{eq:N1E7}
\end{align}
where
\begin{align}
& \tilde{V}^{\langle 2 \rangle} (h;z) = e^{4\pi i  z} \prod_{n=1}^4 (1-e^{2\pi i h_n } e^{-\pi a_{-}}e^{-2\pi i  z}) , \;  \tilde{W}^{\langle 2 \rangle} (l;z) = e^{4\pi i z} \prod_{n=1}^4 (1- e^{2\pi i l_n} e^{\pi a_{-}} e^{-2\pi i z} ),\\
& \tilde{U}^{\langle 2 \rangle} (h,l;z) = \sum _{n=1}^4 ( e^{2\pi i h_n }+ e^{2\pi i l_n })  e^{2\pi i z} -(e^{\pi a_- }+e^{-\pi a_- })  e^{4\pi i z} \nonumber \\
& \qquad  + \prod_{n=1}^4 e^{\pi i (h_n + l_n) } \Big[ \sum _{n=1}^4 ( e^{-2\pi i h_n } + e^{-2\pi i l_n } ) e^{-2\pi i z}  - (e^{\pi a_- }+e^{-\pi a_- } ) e^{-4\pi i z} \Big] . \nonumber 
\end{align}
Note that this operator was also essentially obtained in \cite{vD}.

Set $x= e^{2\pi i z}$, $q= e^{-2\pi a_{-}}$, and replace $e^{2\pi i h_n }$ and $e^{2\pi i l_n }$ by $h_n$ and $l_n$.
Then the difference operator $\tilde{A}^{\langle 2 \rangle} (h,l;z) $ is written as
\begin{align}
& A^{\langle 2 \rangle} (x) g(x) = x^{-2} \prod_{n=1}^4 (x-h_n  q^{1/2} )  g(x/q) + x^{-2} \prod_{n=1}^4 (x-  l_n q^{-1/2} )  g(qx) +U(x) g(x), \label{eq:qsecond} \\
& U(x) =  -(q^{1/2} +q^{-1/2} ) x^2 + \sum _{n=1}^4 (h_n+ l_n ) x \nonumber \\
& \qquad + \prod_{n=1}^4 h_n^{1/2}  l_n^{1/2} \cdot  [ - (q^{1/2} +q^{-1/2}  ) x^{-2} + \sum _{n=1}^4 ( h_n^{-1} + l_n^{-1} ) x^{-1} ] . \nonumber
\end{align}
The equation $A^{\langle 2 \rangle} (x) g(x) =Eg(x) $ is also obtained as a specialization of the linear difference equation which is related with the $q$-Painlev\'e equation of type $E_7^{(1)}$ in \cite{Y}.
We discuss it in section \ref{sec:E7}. 

\subsection{Third degeneration}
\begin{prop} \label{prop:thirddeg}
In Eq.(\ref{eq:N1E7}), we replace $z$ by $z-iR$, $h_n $ $(n=1,2)$ by $h_n -iR$, $h_n $ $(n=3,4)$ by $h_n +iR$, $l_n $ $(n=1,2,3,4)$ by $l_n -iR$ and take the limit $R\to +\infty $.
Then we arrive at the operator
\begin{align}
A^{\langle 3 \rangle} (h,l;z) = V^{\langle 3 \rangle}(h;z)\exp(-ia_{-}\partial_{z})+W^{\langle 3 \rangle}(l;z)\exp(ia_{-}\partial_{z}) +U^{\langle 3 \rangle}(h,l;z),
\end{align}
where
\begin{align}
& V^{\langle 3 \rangle}(h;z) = e^{4\pi i  z} \prod_{n=1}^2 (1-e^{2\pi i h_n } e^{-\pi a_{-}}e^{-2\pi i  z}) , \label{eq:VWU3} \\
& W^{\langle 3 \rangle}(l;z) =e^{4\pi i z} \prod_{n=1}^4 (1- e^{2\pi i l_n} e^{\pi a_{-}} e^{-2\pi i z} ) , \nonumber \\
& U^{\langle 3 \rangle} (h,l;z) =\Big( \sum _{n=1}^2 e^{2\pi i h_n }+\sum _{n=1}^4 e^{2\pi i l_n } \Big)  e^{2\pi i z} -(e^{\pi a_- }+e^{-\pi a_- }) e^{4\pi i z} \nonumber \\
& \qquad +  e^{\pi i h_1}  e^{\pi i h_2}  ( e^{\pi i (h_3 - h_4 )} + e^{\pi i (h_4 - h_3 )})  \prod_{n=1}^4 e^{\pi i l_n} \cdot e^{-2\pi i z} . \nonumber
\end{align}
Namely, as $R\to +\infty $ we have
\begin{equation}
e^{-4\pi R} \tilde{A}^{\langle 2 \rangle} (h -iR v_1, l-iR v_2 ;z -iR) f(z) \to A^{\langle 3 \rangle} (h,l;z) f(z) 
\end{equation}
for any $f(z)$, where $v_1 = (1,1,-1,-1)$, $v_2=(1,1,1,1)$.
\end{prop}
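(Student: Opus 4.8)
The plan is to follow the same scheme as the proof of Proposition \ref{prop:secdeg}: again use the equivalence $a\sim b$ meaning $\lim_{R\to+\infty}a/b=1$, substitute the indicated shifts into each of the three coefficient functions $\tilde V^{\langle 2\rangle}(h;z)$, $\tilde W^{\langle 2\rangle}(l;z)$, $\tilde U^{\langle 2\rangle}(h,l;z)$ of Eq.(\ref{eq:N1E7}), and read off the coefficient of $e^{4\pi R}$ after multiplying by $e^{-4\pi R}$. The two shift operators $\exp(\mp ia_-\partial_z)$ are invariant under the translation $z\mapsto z-iR$, so they are unchanged in the limit; hence it suffices to compute the limits of the three coefficient functions separately. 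Throughout, the substitution acts multiplicatively as $e^{2\pi i z}\mapsto e^{2\pi R}e^{2\pi i z}$, $e^{2\pi i h_n}\mapsto e^{2\pi R}e^{2\pi i h_n}$ for $n=1,2$, $e^{2\pi i h_n}\mapsto e^{-2\pi R}e^{2\pi i h_n}$ for $n=3,4$, and $e^{2\pi i l_n}\mapsto e^{2\pi R}e^{2\pi i l_n}$ for all $n$, with the analogous rules for $e^{\pi i h_n}$ and $e^{\pi i l_n}$.

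For $\tilde V^{\langle 2\rangle}$ and $\tilde W^{\langle 2\rangle}$ the computation is immediate. In $\tilde V^{\langle 2\rangle}(h;z)$ the overall factor $e^{4\pi i z}$ produces $e^{4\pi R}$, each factor with $n=1,2$ is invariant because the $e^{2\pi R}$ from $e^{2\pi i h_n}$ cancels the $e^{-2\pi R}$ from $e^{-2\pi i z}$, while each factor with $n=3,4$ carries $e^{-4\pi R}$ and hence tends to $1$; this gives $e^{-4\pi R}\tilde V^{\langle 2\rangle}(\cdots)\to V^{\langle 3\rangle}(h;z)$. In $\tilde W^{\langle 2\rangle}(l;z)$ every factor with $l_n$ is invariant by the same cancellation, so $e^{-4\pi R}\tilde W^{\langle 2\rangle}(\cdots)=W^{\langle 3\rangle}(l;z)$ with no error term. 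Both agree with Eq.(\ref{eq:VWU3}).

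The real work is in $\tilde U^{\langle 2\rangle}(h,l;z)$, and this is where I expect the only genuine difficulty, namely the bookkeeping of exponential orders under the asymmetric scaling. The term $\sum_{n=1}^4(e^{2\pi i h_n}+e^{2\pi i l_n})e^{2\pi i z}$ contributes at order $e^{4\pi R}$ only through $n=1,2$ of the $h$-sum and through the whole $l$-sum (the $h_3,h_4$ pieces drop to $O(1)$), and $-(e^{\pi a_-}+e^{-\pi a_-})e^{4\pi i z}$ is of order $e^{4\pi R}$. The delicate piece is the last block $\prod_{n=1}^4 e^{\pi i(h_n+l_n)}[\cdots]$: here $\prod_{n=1}^4 e^{\pi i h_n}$ is $R$-invariant (the exponents $+\pi R,+\pi R,-\pi R,-\pi R$ cancel) while $\prod_{n=1}^4 e^{\pi i l_n}$ supplies the full $e^{4\pi R}$ growth, and inside the bracket the only $O(1)$ survivor is $(e^{-2\pi i h_3}+e^{-2\pi i h_4})e^{-2\pi i z}$ coming from $h_3,h_4$, whereas the $h_1,h_2$, the $l$, and the $e^{-4\pi i z}$ contributions are all $O(e^{-4\pi R})$ and drop out. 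Combining these and simplifying $\prod_{n=1}^4 e^{\pi i h_n}(e^{-2\pi i h_3}+e^{-2\pi i h_4})=e^{\pi i h_1}e^{\pi i h_2}(e^{\pi i(h_3-h_4)}+e^{\pi i(h_4-h_3)})$ yields exactly the $e^{-2\pi i z}$ coefficient of $U^{\langle 3\rangle}$. I would also record the one routine safeguard, that no monomial exceeds order $e^{4\pi R}$; unlike the proof of Proposition \ref{prop:secdeg}, there is here no cancellation of a higher-order term, since $\tilde U^{\langle 2\rangle}$ is already a Laurent polynomial in $e^{2\pi i z}$. Granting this, multiplication by $e^{-4\pi R}$ and letting $R\to+\infty$ produces $U^{\langle 3\rangle}(h,l;z)$, and the proposition follows.
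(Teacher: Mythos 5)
Your proposal is correct and follows essentially the same route as the paper's own proof: substitute the shifts, track powers of $e^{2\pi R}$ in $\tilde V^{\langle 2\rangle}$, $\tilde W^{\langle 2\rangle}$, $\tilde U^{\langle 2\rangle}$, and read off the $e^{4\pi R}$ coefficient, the only nontrivial bookkeeping being in $\tilde U^{\langle 2\rangle}$ where the block $\prod_{n=1}^4 e^{\pi i(h_n+l_n)}[\cdots]$ survives only through the $(e^{-2\pi i h_3}+e^{-2\pi i h_4})e^{-2\pi i z}$ term. Your added observations (invariance of the shift operators under $z\mapsto z-iR$, and the absence of any order beyond $e^{4\pi R}$, in contrast to the $e^{6\pi R}$ cancellation in Proposition \ref{prop:secdeg}) are accurate and consistent with the paper's computation.
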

\begin{proof}
We have 
\begin{align}
& \tilde{V}^{\langle 2 \rangle} (h -iRv_1 ;z -iR ) \\
& = e^{4\pi i z}e^{4\pi R} \prod_{n=1}^2 (1-e^{-2\pi i z}e^{2\pi i h_n } e^{-\pi a_{-}}) \prod_{n=3}^4 (1-e^{-4\pi R}e^{-2\pi i z}e^{2\pi i h_n } e^{-\pi a_{-}}) \nonumber \\
& \sim e^{4\pi R} e^{4\pi i z} \prod_{n=1}^2 (1-e^{-2\pi i z}e^{2\pi i h_n } e^{-\pi a_{-}}) , \nonumber
\end{align}
\begin{align}
& \tilde{W}^{\langle 2 \rangle} (l -iRv_2 ;z -iR )  =  e^{4\pi i z}e^{4\pi R} \prod_{n=1}^4 ( 1 -e^{-2\pi i z} e^{2\pi i l_n }e^{\pi a_{-}} ) , 
\end{align}
\begin{align}
& \tilde{U}^{\langle 2 \rangle} (h-iRv_1, l -iRv_2 ;z -iR )  =  -\Big( \sum _{n=1}^2 e^{2\pi i h_n }e^{2\pi R }+\sum _{n=3}^4 e^{2\pi i h_n }e^{-2\pi R }\\
& +\sum _{n=1}^4 e^{2\pi i l_n }e^{2\pi R } \Big) e^{2\pi i z}e^{2\pi R}  +(e^{\pi a_- }+e^{-\pi a_- })e^{4\pi i z}e^{4\pi R}  +e^{4\pi R } \prod _{n=1}^4 e^{\pi i h_n } \prod_{n=5}^8 e^{\pi i l_n } \Big\{  \nonumber \\
&  \Big( \sum _{n=1}^2  e^{-2\pi i h_n }e^{-2\pi R } +\sum _{n=3}^4 e^{-2\pi i h_n }e^{2\pi R } + \sum _{n=1}^4 e^{-2\pi i l_n }e^{-2\pi R } \Big) e^{-2\pi i z}e^{-2\pi R}\nonumber \\
&  - (e^{\pi a_- }+e^{-\pi a_- } ) e^{-4\pi i z}e^{-4\pi R} \Big\} \sim e^{4\pi R}  \Big\{ -\Big(\sum _{n=1}^2 e^{2\pi i h_n }+\sum _{n=1}^4 e^{2\pi i l_n } \Big) e^{2\pi i z} \nonumber \\
& +(e^{\pi a_- }+e^{-\pi a_- })e^{4\pi i z} +\Big( \prod _{n=1}^4 e^{\pi i h_n } e^{\pi i l_n } \cdot \sum _{n=3}^4 e^{-2\pi i h_n } \Big) e^{-2\pi i z}  \Big\} .\nonumber
\end{align}
Thus the proposition is obtained.
\end{proof}

Set $x= e^{2\pi i z}$, $q= e^{-2\pi a_{-}}$, and replace $e^{2\pi i h_n }$ and $e^{2\pi i l_n }$ by $h_n$ and $l_n$.
Then the difference operator $A^{\langle 3 \rangle} (h,l;z)$ is written as
\begin{align}
& A^{\langle 3 \rangle} (x) g(x) = \prod_{n=1}^2 (x-h_n q^{1/2} )  g(x/q) + x^{-2} \prod_{n=1}^4 (x- l_n q^{-1/2}  ) g(qx) +U(x) g(x),  \\
& U(x) = \Big (\sum _{n=1}^2 h_n +\sum _{n=1}^4 l_n \Big) x -(q^{1/2} +q^{-1/2} ) x^2 +  (l_1 l_2 l_3 l_4 h_1 h_2 )^{1/2} ( h_3^{1/2}h_4^{-1/2}+ h_3^{-1/2}h_4^{1/2}) x^{-1} .\nonumber
\end{align}
Let $E$ be a constant. The equation $A^{\langle 3 \rangle} (x) g(x) =Eg(x) $ is written as
\begin{align}
&  \prod_{n=1}^2 (x-h_n q^{1/2} ) g(x/q) + x^{-2} \prod_{n=1}^4 (x- l_n q^{-1/2}  ) g(qx) +\{ -(q^{1/2} +q^{-1/2} ) x^2  \label{eq:qthird} \\
&   +\Big( \sum _{n=1}^2 h_n +\sum _{n=1}^4 l_n \Big) x -E + (l_1 l_2 l_3 l_4 h_1 h_2 )^{1/2} ( h_3^{1/2}h_4^{-1/2}+ h_3^{-1/2}h_4^{1/2})  x^{-1} \} g(x) =0. \nonumber
\end{align}
This equation is also obtained as a specialization of the linear difference equation which is related with the $q$-Painlev\'e equation of type $E_6^{(1)}$ in \cite{Y}.
We discuss it in section \ref{sec:E6}. 
Set
\begin{align}
& \bar{A}^{\langle 3 \rangle} (x)  = v(x) \circ A^{\langle 3 \rangle} (h,l;z) \circ v(x)^{-1} ,\\
& v(x)= ( x^{-1} l_4 q^{1/2} ; q)_{\infty } =\prod _{k=0}^{\infty } (1-  x^{-1} l_4 q^{1/2} q^k) . \nonumber
\end{align}
We replace $h_3$, $h_4 $ and $l_4$ by $\tilde{h}$, $1$ and $h_3$ in Eq.(\ref{eq:qthird}). Then the gauge transformed equation $\bar{A}^{\langle 3 \rangle} (x) g(x) =Eg(x) $ is written as
\begin{align}
& \prod_{n=1}^3 (x-h_n q^{1/2} ) g(x/q) + \prod_{n=1}^3 (x- l_n q^{-1/2}  ) g(qx) +\Big\{ -(q^{1/2} +q^{-1/2} ) x^3  \label{eq:qthird''''} \\
&   +\sum _{n=1}^3 ( h_n + l_n ) x^2 -E x + (l_1 l_2 l_3 h_1 h_2 h_3)^{1/2} ( \tilde{h}^{1/2} + \tilde{h} ^{-1/2} ) \Big\} g(x) =0. \nonumber
\end{align}

To obtain the fourth degeneration, we apply the gauge transformation 
\begin{equation}
\tilde{A}^{\langle 3 \rangle} (h,l;z)  = R_- (z)^{2} \circ A^{\langle 3 \rangle} (h,l;z) \circ  R_- (z)^{-2} .
\end{equation}
Then we have
\begin{equation}
\tilde{A}^{\langle 3 \rangle} (h,l;z) = \tilde{V}^{\langle 3 \rangle} (h;z)\exp(-ia_{-}\partial_{z})+ \tilde{W}^{\langle 3 \rangle} (l;z)\exp(ia_{-}\partial_{z}) +U^{\langle 3 \rangle} (h,l;z),
\label{eq:N1E6}
\end{equation}
where $U^{\langle 3 \rangle} (h,l;z)$ was defined in Eq.(\ref{eq:VWU3}) and
\begin{align}
& \tilde{V}^{\langle 3 \rangle} (h;z) = e^{-2\pi a_{-}} \prod_{n=1}^2 (1-e^{2\pi i h_n } e^{-\pi a_{-}}e^{-2\pi i  z}),\\
& \tilde{W}^{\langle 3 \rangle} (l;z) = e^{-2\pi a_{-}} e^{8\pi i z}\prod_{n=1}^4 (1- e^{2\pi i l_n} e^{\pi a_{-}} e^{-2\pi i z} ) .\nonumber 
\end{align}
\subsection{Fourth degeneration and $q$-Heun equation}
\begin{prop} \label{prop:fourthdeg}
In Eq.(\ref{eq:N1E6}), we replace $z$ by $z+iR$, $h_n $ $(n=1,2,3,4)$ by $h_n +iR$, $l_n $ $(n=1,2)$ by $l_n +iR$, $l_n $ $(n=3,4)$ by $l_n -iR$  and take the limit $R\to +\infty $.
Then we arrive at the operator
\begin{align}
A^{\langle 4 \rangle} (h,l;z) = V^{\langle 4 \rangle}(h;z)\exp(-ia_{-}\partial_{z})+W^{\langle 4 \rangle}(l;z)\exp(ia_{-}\partial_{z}) +U^{\langle 4 \rangle}(h,l;z),
\label{eq:N1D50}
\end{align}
where
\begin{align}
& V^{\langle 4 \rangle} (h;z) = e^{-2\pi a_{-}} \prod_{n=1}^2 (1-e^{2\pi i h_n } e^{-\pi a_{-}}e^{-2\pi i  z}) , \label{eq:VWU4} \\
& W^{\langle 4 \rangle} (l;z) =  e^{4\pi i z} e^{2\pi i (l_3 +l_4)}  \prod_{n=1}^2 (1- e^{2\pi i l_n} e^{\pi a_{-}} e^{-2\pi i z} ) ,\nonumber \\
& U^{\langle 4 \rangle}(h,l;z) = (e^{2\pi i l_3 } + e^{2\pi i l_4 } ) e^{2\pi i z} +  e^{\pi i h_1}  e^{\pi i h_2}  ( e^{\pi i (h_3 - h_4 )} + e^{\pi i (h_4 - h_3 )})  \prod_{n=1}^4 e^{\pi i l_n} \cdot e^{-2\pi i z} .\nonumber 
\end{align}
Namely, as $R\to +\infty $ we have
\begin{equation}
e^{-4\pi R} \tilde{A}^{\langle 3 \rangle} (h +iR v_2, l+iR v_1 ;z +iR) f(z) \to A^{\langle 4 \rangle} (h,l;z) f(z) 
\end{equation}
for any $f(z)$, where $v_1 = (1,1,-1,-1)$, $v_2=(1,1,1,1)$.
\end{prop}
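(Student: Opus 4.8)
The plan is to proceed exactly as in the proofs of Propositions \ref{prop:secdeg} and \ref{prop:thirddeg}: I would substitute the shifted arguments into the three coefficient functions $\tilde{V}^{\langle 3\rangle}$, $\tilde{W}^{\langle 3\rangle}$ and $U^{\langle 3\rangle}$ of the operator $\tilde{A}^{\langle 3\rangle}$ in Eq.(\ref{eq:N1E6}), and extract the leading behaviour as $R\to+\infty$ using the equivalence $a\sim b \iff \lim_{R\to+\infty}a/b=1$. Throughout I use that $z\mapsto z+iR$ sends $e^{2\pi i z}\mapsto e^{2\pi i z}e^{-2\pi R}$ and $e^{-2\pi i z}\mapsto e^{-2\pi i z}e^{2\pi R}$, that $h_n\mapsto h_n+iR$ sends $e^{2\pi i h_n}\mapsto e^{2\pi i h_n}e^{-2\pi R}$, and that $l_n\mapsto l_n+iR$ (resp.\ $l_n\mapsto l_n-iR$) sends $e^{2\pi i l_n}\mapsto e^{2\pi i l_n}e^{\mp 2\pi R}$. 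Since the limit is asserted for the operator applied to an arbitrary $f(z)$ and the shift operators $\exp(\pm ia_-\partial_z)$ are $R$-independent, it suffices to take the limit of the three coefficient functions one at a time.

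For the forward coefficient $\tilde{V}^{\langle 3\rangle}(h;z)=e^{-2\pi a_-}\prod_{n=1}^2(1-e^{2\pi i h_n}e^{-\pi a_-}e^{-2\pi i z})$ the key observation is that each factor depends on $h_n$ and $z$ only through the product $e^{2\pi i h_n}e^{-2\pi i z}$, which is invariant under the common shift $h_n,z\mapsto\,\cdot\,+iR$. Hence $\tilde{V}^{\langle 3\rangle}$ is literally unchanged and equals $V^{\langle 4\rangle}(h;z)$. For the backward coefficient $\tilde{W}^{\langle 3\rangle}(l;z)=e^{-2\pi a_-}e^{8\pi i z}\prod_{n=1}^4(1-e^{2\pi i l_n}e^{\pi a_-}e^{-2\pi i z})$ the two factors with $n=1,2$ are again invariant (the shift of $l_n$ cancels that of $z$), whereas for $n=3,4$ the opposite shift $l_n\mapsto l_n-iR$ drives each factor to infinity, $1-e^{2\pi i l_n}e^{\pi a_-}e^{-2\pi i z}e^{4\pi R}\sim -e^{2\pi i l_n}e^{\pi a_-}e^{-2\pi i z}e^{4\pi R}$, so their product contributes $e^{2\pi i(l_3+l_4)}e^{2\pi a_-}e^{-4\pi i z}e^{8\pi R}$. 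Multiplying by the prefactor $e^{-2\pi a_-}e^{8\pi i z}e^{-8\pi R}$, the powers $e^{\pm 8\pi R}$ and $e^{\mp 2\pi a_-}$ cancel and $e^{8\pi i z}e^{-4\pi i z}=e^{4\pi i z}$ remains, leaving $W^{\langle 4\rangle}(l;z)$.

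For the potential term $U^{\langle 3\rangle}$ I would split the three groups of terms in Eq.(\ref{eq:VWU3}) and compare their orders in $R$. In the first sum the summands $e^{2\pi i h_n}e^{2\pi i z}$ $(n=1,2)$ and $e^{2\pi i l_n}e^{2\pi i z}$ $(n=1,2)$ are of order $e^{-4\pi R}$ and drop out, while $e^{2\pi i l_n}e^{2\pi i z}$ $(n=3,4)$ is of order $e^{0}$ and survives as $(e^{2\pi i l_3}+e^{2\pi i l_4})e^{2\pi i z}$; the term $-(e^{\pi a_-}+e^{-\pi a_-})e^{4\pi i z}$ is of order $e^{-4\pi R}$ and drops out; and the final reciprocal term is invariant, since the prefactors from $e^{\pi i h_1}e^{\pi i h_2}$ (total $e^{-2\pi R}$), from $e^{\pi i(h_3-h_4)}+e^{\pi i(h_4-h_3)}$ (invariant), from $\prod_{n=1}^4 e^{\pi i l_n}$ (the $n=1,2$ contribution $e^{-2\pi R}$ cancelling the $n=3,4$ contribution $e^{2\pi R}$), and from $e^{-2\pi i z}$ (giving $e^{2\pi R}$) together have total weight zero, leaving $e^{\pi i h_1}e^{\pi i h_2}(e^{\pi i(h_3-h_4)}+e^{\pi i(h_4-h_3)})\prod_{n=1}^4 e^{\pi i l_n}\,e^{-2\pi i z}$. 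Summing the survivors gives $U^{\langle 4\rangle}(h,l;z)$, and reassembling the three coefficients yields the claimed limit.

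The routine part is this exponent counting; the one place that genuinely requires care is the backward coefficient $\tilde{W}^{\langle 3\rangle}$, where the decaying prefactor $e^{-8\pi R}$ coming from $e^{8\pi i z}$ must be balanced precisely against the growth $e^{8\pi R}$ produced by the two factors $n=3,4$ that are driven to infinity. Note that, in contrast to the earlier degenerations, the forward coefficient here carries no growing exponential in $R$, and in fact all three coefficients stabilise already at order $e^{0}$; the main obstacle is therefore to confirm that the three coefficients stabilise at the same order in $e^{R}$, so that a single common normalising power of $e^{R}$ reproduces $A^{\langle 4\rangle}(h,l;z)$ with no term of the operator lost or left growing.
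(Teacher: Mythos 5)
Your proposal is correct and follows essentially the same route as the paper's own proof: direct substitution of the shifted arguments into the three coefficients of $\tilde{A}^{\langle 3\rangle}$ and term-by-term extraction of the leading behaviour as $R\to+\infty$, with the same bookkeeping (exact invariance of $\tilde{V}^{\langle 3\rangle}$, cancellation of the decay $e^{-8\pi R}$ from $e^{8\pi i z}$ against the growth $e^{8\pi R}$ of the $n=3,4$ factors in $\tilde{W}^{\langle 3\rangle}$, and survival of only the $l_3,l_4$ terms and the reciprocal term in $U^{\langle 3\rangle}$). One remark: since, as you correctly observe, all three coefficients stabilise at order $e^{0}$, the limit holds with \emph{no} normalising prefactor at all; the factor $e^{-4\pi R}$ in the proposition's displayed equation is an inconsistency carried over from Proposition \ref{prop:thirddeg} (where the coefficients do grow like $e^{4\pi R}$), and the paper's own proof, like your computation, establishes the limit without it.
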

\begin{proof}
We have
\begin{align}
& \tilde{V}^{\langle 3 \rangle} (h +iRv_2 ;z +iR ) =  e^{-2\pi a_{-}} \prod_{n=1}^2 (1-e^{2\pi i h_n } e^{-\pi a_{-}}e^{-2\pi i  z}),
\end{align}
\begin{align}
& \tilde{W}^{\langle 3 \rangle} (l +iRv_1 ;z +iR ) = e^{-8\pi R} e^{-2\pi a_{-}} e^{8\pi i z}\prod_{n=1}^2 (1- e^{2\pi i l_n} e^{\pi a_{-}} e^{-2\pi i z} )\cdot \\
& \cdot  \prod_{n=3}^4 (1- e^{2\pi i l_n} e^{\pi a_{-}} e^{-2\pi i z} e^{4\pi  R}) \sim e^{4\pi i z}\prod_{n=1}^2 (1- e^{2\pi i l_n} e^{\pi a_{-}} e^{-2\pi i z} ) \prod_{n=3}^4 e^{2\pi i l_n} , \nonumber 
\end{align}
\begin{align}
& \tilde{U}^{\langle 3 \rangle} (h +iRv_2,l+ iRv_1 ;z +iR ) \\
& =  \Big( \sum _{n=1}^2 e^{2\pi i h_n } e^{-2\pi  R}+\sum _{n=1}^2 e^{2\pi i l_n }e^{-2\pi  R} +\sum _{n=3}^4 e^{2\pi i l_n }e^{2\pi  R}\Big) e^{2\pi i z}e^{-2\pi  R} \nonumber \\
& -(e^{\pi a_- }+e^{-\pi a_- }) e^{4\pi i z} e^{-4\pi  R}+ \prod_{n=1}^4 e^{\pi i l_n} \cdot e^{\pi i h_1}  e^{\pi i h_2} e^{-2\pi  R} ( e^{\pi i (h_3 - h_4 )} + e^{\pi i (h_4 - h_3 )})  e^{-2\pi i z} e^{2\pi  R} \nonumber \\
& \sim ( e^{2\pi i l_3 } + e^{2\pi i l_4 } ) e^{2\pi i z} + \prod_{n=1}^4 e^{\pi i l_n} \cdot e^{\pi i h_1}  e^{\pi i h_2}  ( e^{\pi i (h_3 - h_4 )} + e^{\pi i (h_4 - h_3 )}) e^{-2\pi i z} .\nonumber
\end{align}
\end{proof}
We apply a gauge transformation to the operator in Proposition \ref{prop:firstdeg} by using the function $R_-(z)$ defined in Eq.(\ref{eq:R+-}), which satisfies $R_- (z \mp ia_- ) =-e^{\pi a_-} e^{\pm 2\pi i z} R_- (z)$. 

By the multiplication and the gauge transformation given by
\begin{equation}
\tilde{A}^{\langle 4 \rangle} (h,l;z)  = - R_- (z)^{-1}  e^{- \pi i z }\circ A^{\langle 4 \rangle} (h,l;z) \circ  R_- (z) e^{\pi i z } , 
\end{equation}
we have 
\begin{align}
\tilde{A}^{\langle 4 \rangle} (h,l;z) = \tilde{V}^{\langle 4 \rangle} (h;z)\exp(-ia_{-}\partial_{z})+\tilde{W}^{\langle 4 \rangle}_{j}(l;z)\exp(ia_{-}\partial_{z}) -U^{\langle 4 \rangle}(h,l;z),
\label{eq:N1D5}
\end{align}
where was defined in Eq.(\ref{eq:VWU4}) and 
\begin{align}
& \tilde{V}^{\langle 4 \rangle} (h;z) = e^{2\pi i z} \prod_{n=1}^2 (1-e^{2\pi i h_n } e^{-\pi a_{-}}e^{-2\pi i  z}) ,\\
& \tilde{W}^{\langle 4 \rangle} (l;z) = e^{2\pi i l_3} e^{2\pi i l_4}  e^{2 \pi i z} \prod_{n=1}^2 (1- e^{2\pi i l_n} e^{\pi a_{-}} e^{-2\pi i z} ) .\nonumber 
\end{align}
Set $x= e^{2\pi i z}$, $q= e^{-2\pi a_{-}}$, replace $e^{2\pi i h_n }$ and $e^{2\pi i l_n }$ by $h_n$ and $l_n$, and set $h_4=1$.
Then the difference operator is written as
\begin{align}
& A^{\langle 4 \rangle} (x) g(x) =  x^{-1} (x-h_1 q^{1/2} ) (x-h_2 q^{1/2} )  g(x/q) + x^{-1}  l_3 l_4 (x - l_1 q^{-1/2} ) (x - l_2 q^{-1/2} ) g(qx) \\
& \qquad -\{ (l_3 +l_4 ) x  + (l_1 l_2 l_3 l_4 h_1 h_2 )^{1/2} ( h_3^{1/2}+ h_3^{-1/2}) x^{-1} \} g(x) . \nonumber
\end{align}
Let $E$ be a constant. The equation $A^{\langle 4 \rangle} (x) g(x) =Eg(x) $ is written as
\begin{align}
& (x-h_1 q^{1/2} ) (x-h_2 q^{1/2} ) g(x/q)  + l_3 l_4 (x - l_1 q^{-1/2} ) (x - l_2 q^{-1/2} ) g(xq) \label{eq:RuijD5}
\\
&  \qquad -\{ (l_3 +l_4 ) x^2 + E x + (l_1 l_2 l_3 l_4 h_1 h_2 )^{1/2} ( h_3^{1/2}+ h_3^{-1/2}) \} g(x) =0. \nonumber
\end{align}
This equation is also obtained as a specialization of the linear difference equation which is related with the $q$-Painlev\'e VI equation \cite{JS}.
We discuss it in section \ref{sec:D5}.

We call Eq.(\ref{eq:RuijD5}) the $q$-Heun equation, because it has a limit to the Heun equation, which we show in the rest of this subsection.
We rewrite Eq.(\ref{eq:RuijD5}) as
\begin{align}
& (x-t_1q^{h_1} q^{1/2} ) (x-t_2 q^{h_2} q^{1/2} ) g(x/q)  + q^{l_3 +l_4} (x - t_1 q^{l_1} q^{-1/2} ) (x - t_2 q^{l_2} q^{-1/2} ) g(xq)] \label{eq:qHeun} \\
& \qquad  -\{ (q^{l_3} +q^{l_4} ) x^2 - (2 (t_1+t_2) +(q-1)E_1 +(q-1)^2 \tilde{E} ) x \nonumber \\
& \qquad + t_1 t_2 q^{(l_1 +l_2 +l_3 +l_4 +h_1 + h_2)/2} ( q^{h_3/2} + q^{-h_3/2}) \} g(x) =0,\nonumber
\end{align}
where
\begin{equation}
E_1= (l_3 +l_4 )(t_1+t_2) +(l_1 +h_1 )t_1 +(l_2 +h_2 )t_2.
\end{equation}
Set $q=1+ \varepsilon $.
We divide Eq.(\ref{eq:qHeun}) by $\varepsilon ^2$.
By using Taylor's expansion
\begin{align}
& g(x/q) =g(x) + (-\varepsilon +\varepsilon ^2 )xg'(x) + \varepsilon ^2 x^2 g''(x) /2 +O( \varepsilon ^3),\\
& g(qx) =g(x) + \varepsilon  x g'(x) + \varepsilon ^2 x^2 g''(x) /2 +O( \varepsilon ^3), \nonumber 
\end{align}
we find the following limit as $\varepsilon  \to 0$:
\begin{align}
& x^2(x-t_1)(x-t_2) g''(x) \label{eq:Fuchs4sing} \\
& +  [ (1+ h_2 - l_2)x(x- t_1 ) + (1 +h_1 - l_1 )x(x- t_2) + ( \tilde{l}  -1 ) (x -t_1)(x-t_2) ]  xg'(x) \nonumber \\
& + [ ( l_3 l_4 e^2  ) x^2 + \tilde{B} x + t_1 t_2 ( \tilde{l}/2 -1 +h_3/2)( \tilde{l}/2 -1 -h_3/2)] g(x) =0, \nonumber 
\end{align}
where
\begin{align}
& \tilde{l} = l_1 +l_2 +l_3 +l_4 -h_1 - h_2 , \\ 
& \tilde{B} = \tilde{E} - \frac{t_1}{2} \Big\{ h_1^2 + (l_3 +l_4+ l_1 - 1)^2 - \frac{1}{2} \Big\} + \frac{t_2}{2} \Big\{ h_2^2 + (l_3 +l_4+ l_2 - 1)^2 -\frac{1}{2} \Big\} . \nonumber 
\end{align}
This equation is a Fuchsian differential equation with four singularities $\{ 0,t_1, t_2, \infty \}$ and the local exponents are given by the following Riemann scheme:
\begin{equation}
\begin{pmatrix}
x=0 &  x=t_1 &  x=t_2 &  x=\infty  \\
1-\tilde{l}/2 +h_3/2 & 0 & 0 & l_3 \\ 
1-\tilde{l}/2 -h_3/2 & l_1-h_1 & l_2-h_2 & l_4  
\end{pmatrix}
\end{equation}
By setting $z=x/t_1$ and $g(x)=x^{1-\tilde{l}/2 - h_3/2 }\tilde{g}(z)$, the function $y=\tilde{g}(z) $ satisfies Heun's differential equation;
\begin{align}
\frac{d^2y}{dz^2} +  \left( \frac{\gamma}{z}+\frac{\delta }{z-1}+\frac{\epsilon}{z-t}\right) \frac{dy}{dz} + \frac{\alpha \beta z -q}{z(z - 1)(z - t)} y= 0,
\label{eq:Heun1}
\end{align}
where $t=t_2/t_1$, $\gamma = 1 - h_3$, $\delta = 1 +h_1 -l_1$, $\epsilon = 1+h_2-l_2$, $\{ \alpha ,\beta \} = \{ l_3 +1 -\tilde{l}/2 -h_3/2 , l_4 + 1 -\tilde{l}/2 -h_3/2 \} $ and $t=t_2/t_1$.
Therefore it is reasonable to call Eq.(\ref{eq:RuijD5}) the $q$-Heun equation.

\section{Linear $q$-difference equations related with $q$-Painlev\'e equations} \label{sec:JSY}
It is widely known that the Lax formalism is a powerful method for studying soliton equations, and a similar method is applied for Painlev\'e-type equations.
In particular Jimbo and Sakai \cite{JS} obtained $q$-Painlev\'e VI (or the $q$-Painlev\'e equation of type $D^{(1)}_5$) by finding "Lax forms".
We may regard the Lax forms for difference Painlev\'e equations as a linear difference equation and an associated deformation equation where the difference Painlev\'e equation is written in the form of compatibility of them (see \cite{SakP}). 

About ten years later from the discovery of $q$-Painlev\'e VI, Sakai \cite{SakP} presented a problem to find Lax forms for difference Painlev\'e equations in his list \cite{Sak}, and Yamada made a significant contribution for the problem.
Namely, Yamada discovered Lax forms for the $q$-difference Painlev\'e equations of types $D_5^{(1)}$, $E_6^{(1)}$, $E_7^{(1)}$, $E_8^{(1)}$ \cite{Y} and the elliptic difference Painlev\'e equation \cite{Ye}.
For type $D_5^{(1)}$, it essentially coincides with the one by Jimbo and Sakai \cite{JS}.

In this section, we observe that our degenerete operators of Ruijsenaars-van Diejen appear by restricting the parameters in the linear $q$-differential equations related to $q$-Painlev\'e equations of types $D_5^{(1)}$, $E_6^{(1)}$, $E_7^{(1)}$.

\subsection{Linear $q$-difference equation related with $q$-Painlev\'e VI} \label{sec:D5}

Jimbo and Sakai \cite{JS} found a Lax form for the $q$-Painlev\'e VI by considering connection preserving deformation of the linear system of $q$-differential equations
\begin{align}
& Y(qx,t)= A(x,t) Y(x,t) . \label{eq:JSlinDEqx}
\end{align}
To derive $q$-Painlev\'e VI, they rewrite the condition for connection preserving deformation as compatibility of Eq.(\ref{eq:JSlinDEqx}) with a deformation equation of the form $Y(x,qt)= B(x,t) Y(x,t)$, which is a Lax form of $q$-Painlev\'e VI.

We focus on Eq.(\ref{eq:JSlinDEqx}).
The $2\times 2$ matrix $A(x,t)$ is taken in the form Eqs.(9)-(11) in \cite{JS}, i.e.
\begin{align}
& A(x,t)= A_0 (t) +A_1 (t) x +A_2 x^2 = 
\begin{pmatrix}
a_{11}(x) & a_{12}(x) \\
a_{21}(x) & a_{22}(x)
\end{pmatrix} ,\\
& A_2 = \begin{pmatrix}
\kappa _1 & 0 \\
0 & \kappa _2
\end{pmatrix} , \quad A_0(t) \mbox{ has eigenvalues } t\theta _1, t\theta _2 , \nonumber  \\
& \det A(x,t)= \kappa _1 \kappa _2 (x-ta_1)(x-ta_2)(x-a_3)(x-a_4). \nonumber
\end{align}
Note that we have the relation $\kappa_ 1 \kappa_2 a_1 a_2 a_3 a_4 =\theta _1 \theta _2$.
Define $\lambda $, $\mu _1$, $\mu _2$ by 
\begin{align}
& a_{12}(\lambda )=0, \; \mu _1= a_{11}(\lambda )/\kappa _1 , \; \mu _2 = a_{22}(\lambda )/ \kappa _2
\end{align}
so that $\mu _1 \mu _2 =(\lambda -ta_1)(\lambda -ta_2)(\lambda -a_3)(\lambda -a_4)$ (see Eq.(14) in \cite{JS}) and introduce $\mu $ by $\mu =(\lambda -ta_1)(\lambda -ta_2)/(q\kappa _1 \mu _1)$.
Then the matrix elements can be parametrized by these variables and the gauge freedom $w$ (see page 149 of \cite{JS}).

The first component of $Y(x)$ satisfies the following equation:
\begin{align}
& Y_1 (q^2x) - \Big( a_{11}(qx) + \frac {a_{12} (qx)}{a_{12}(x)} a_{22} (x) \Big) Y_1 (qx) \label{eq:qY1} \\
& \qquad + \frac {a_{12} (qx)}{a_{12}(x)} (a_{11}(x) a_{22} (x) - a_{12}(x) a_{21} (x) ) Y_1 (x)=0 . \nonumber
\end{align}
In our parametrisation, we have
\begin{align}
& \frac {a_{12} (qx)}{a_{12}(x)} (a_{11}(x) a_{22} (x) - a_{12}(x) a_{21} (x) ) = \frac{qx-\lambda  }{x-\lambda  } \kappa _1 \kappa _2 (x-ta_1)(x-ta_2)(x-a_3)(x-a_4) , \label{eq:qY1coeff} \\
& a_{11}(qx) +\frac {a_{12} (qx)}{a_{12}(x)} a_{22} (x)= \frac{q (q \kappa_ 1 + \kappa _2 ) x^3+c_2 x^2 + c_1 x - \lambda   t (\theta _1 +\theta _2 )}{x-\lambda  } , \nonumber\\
& c_2 = \frac{q^2 \kappa_ 1 \kappa _2 (\lambda  -a_3) (\lambda  -a_4) \mu  }{\lambda }  -(q+1) (q \kappa_ 1 + \kappa _2) \lambda  -\frac{q t (\theta _1 +\theta _2 ) }{\lambda } +\frac{(\lambda  -a_1 t) (\lambda  -a_2 t)}{\lambda  \mu } , \nonumber \\
& c_1 = -q \kappa_ 1 \kappa _2 (\lambda  -a_3) (\lambda  -a_4) \mu +(q\kappa_ 1 + \kappa _2 ) \lambda  ^2+(q+1) t (\theta _1 +\theta _2 )-\frac{(\lambda  -a_1 t) (\lambda  -a_2 t)}{\mu } . \nonumber
\end{align}
Note that there are two accessory parameters $\lambda $ and $\mu $, which play the role of dependent variables in the $q$-Painlev\'e VI equation (see Eqs.(19), (20) in \cite{JS}).
We may regard Eq.(\ref{eq:qY1}) with Eq.(\ref{eq:qY1coeff}) as a $q$-difference analogue of Eq.(\ref{eq:linP6}) in the differential equations.
We impose a restriction on the accessory parameters as we have done for Eq.(\ref{eq:linP6}) to obtain the Heun equation.
Here we require $\lambda =a_3$. Then we have
\begin{align}
& Y_1 (q^2x) - \{ q (q \kappa_ 1 + \kappa _2 ) x^2  +d _1 x + t (\theta _1 +\theta _2 ) \} Y_1 (qx) \\
& \qquad + \kappa _1 \kappa _2 (qx-a_3) (x-ta_1)(x-ta_2)(x-a_4) Y_1 (x)=0 , \nonumber 
\end{align}
where
\begin{align}
& d_1 = \frac{(a_1 t-a_3)(a_2 t-a_ 3)}{a_3 \mu } -a_3 (q \kappa_ 1 + \kappa _2 ) - \frac{q t (\theta _1 +\theta _2 )}{a_3} . 
\end{align}
Let $u(x)$ be the function which satisfies $u(qx) =(x-ta_1)(x-ta_2) u(x) $, e.g. $u(x)=x^{(\log t^2a_1 a_2)/\log q}(x/(ta_1);q)_{\infty } (x/(ta_2);q)_{\infty } $.
Then the function $f(x)=Y_1 (qx)/u(qx)$ satisfies
\begin{align}
& (x-ta_1)(x-ta_2) f (qx) - \{ ((q \kappa_ 1 + \kappa _2 )/q) x^2 +(d_1/q) x  + t (\theta _1 +\theta _2 ) \} f(x) \label{eq:qlinP6} \\
& \qquad + (\kappa _1 \kappa _2 /q) (x-a_3) (x-q a_4) f(x/q) =0.\nonumber 
\end{align}
Note that there is the relation $\kappa_ 1 \kappa_2 a_1 a_2 a_3 a_4 =\theta _1 \theta _2$.
In Eq.(\ref{eq:RuijD5}), we set
\begin{align}
& l_1= a_1 t q^{1/2}, \; l_2= a_2 t q^{1/2}, \; h_1 = a_3 q^{-1/2}, \; h_2 = a_4 q^{1/2},\\
& l_3= 1/\kappa _1 , \; l_4 = q/ \kappa _2 , \; h_3 ^{1/2}= \theta _1 (a_1 a_2 a_3 a_4 \kappa _1 \kappa _2  )^{-1/2}, \; E =d_1/(\kappa _1 \kappa _2). \nonumber 
\end{align}
Then we have $h_3 ^{-1/2}= \theta _2 (a_1 a_2 a_3 a_4 \kappa _1 \kappa _2  )^{-1/2}  $ and Eq.(\ref{eq:qlinP6}).

Hence Eq.(\ref{eq:qlinP6}) is obtained by the fourth degeneration of the Ruijsenaars-van Diejen operator.
We may regard $d_1$ as an accessory parameter.

\subsection{Linear $q$-difference equation related with $q$-Painlev\'e equation of type $E_6^{(1)}$} \label{sec:E6}
Yamada \cite{Y} derived a $q$-difference Painlev\'e equation of type $E_6^{(1)}$ by Lax formalism, i.e. the compatibility condition for two linear $q$-difference equations.
One of the linear difference equations is written as Eq.(40) in \cite{Y}, i.e.
\begin{align}
& \frac{(b_1q-z)(b_2q-z)(b_3q-z)(b_4q-z)t^2}{q(fq-z) z^4}\left[ y(z/q) -\frac{gz}{t^2(gz-q)}y(z) \right] \label{eq:E6lin} \\
& + \frac{(b_5 t-z)(b_6 t-z)}{(f-z) z^2 t^2}\left[ y(qz) -\frac{(gz-1)t^2}{gz}y(z) \right] \nonumber \\
& +\left[ \frac{(b_1 g-1)(b_2 g-1)(b_3 g-1)(b_4 g-1)t^2}{g(fg-1) z^2(gz-q)} - \frac{b_5 b_6 (b_7 g-t) (b_8g-t) }{fg z^3} \right] y(z)=0.\nonumber 
\end{align}
We may regard $f$, $g$ as accessory parameters.
The other linear equation (deformation equation) contains the difference on the variable $t$ as well as the variable $x$ (see Eq.(40) in \cite{Y}).
By compatibility condition for two linear difference equations, we have $q$-Painlev\'e equation of type $E_6^{(1)}$ for the dependent variables $f$ and $g$ and independent variable $t$ (see Eq.(39) in \cite{Y}).

We now specialize the parameters $f$ and $g$ to $f=b_1 $ in Eq.(\ref{eq:E6lin}).
Then we obtain
\begin{align}
& \frac{(b_2q -z)(b_3q -z)(b_4q -z)t^2}{q z^4}y(z/q) + \frac{c(z)}{z^2 q^{1/2} (b_1-z)}y(z) + \frac{(z-b_5 t)(z-b_6 t)}{(b_1-z)z^2t^2} y(qz)=0, \label{eq:E6linsp0} \\
& c(z) = -(q^{1/2 }+q^{-1/2}) z^2 + (b_1q^{-1/2}+b_2 q^{1/2}+b_ 3 q^{1/2}+b_4 q^{1/2}+ b_5 tq^{1/2}+ b_6 tq^{1/2})z \nonumber \\
& \qquad +c_1 + \frac{q^{1/2} t b_5 b_6 (b_7 + b_8 )}{ z } , \nonumber
\end{align}
and the term $c_1$ contains the accessory parameter $g$.
Note that there is a relation $q b_1 b_2 b_3 b_4 = b_5 b_6 b_7 b_8$ in Yamada's paper.
We apply a gauge transformation by $y(z)= z^{-1/2+2(\log t)/\log q} \tilde{y}(z)$.
Then we have
\begin{align}
& \frac{(z-b_1)(z- b_2q)(z- b_3q)(z- b_4q)}{ z^2}\tilde{y}(z/q)+c(z) \tilde{y}(z) + (z-b_5 t)(z-b_6 t) \tilde{y}(qz)=0. \label{eq:E6linsp}
\end{align}
By replacing $q$ to $q^{-1}$, it turns out that Eq.(\ref{eq:E6linsp}) is the third degeneration of the Ruijsenaars-van Diejen operator in Eq.(\ref{eq:qthird}).
The eigenvalue $E$ in Eq.(\ref{eq:qthird}) essentially corresponds to the accessory parameter $g$ in $c_1$.

\subsection{Linear $q$-difference equation related with $q$-Painlev\'e equation of type $E_7^{(1)}$} \label{sec:E7}
Yamada \cite{Y} also derived a $q$-difference Painlev\'e equation of type $E_7^{(1)}$ by Lax formalism, i.e. the compatibility condition for two linear $q$-difference equations.
Set
\begin{equation}
B_1(z)=(1-b_1 z) (1-b_2 z) (1-b_3 z) (1-b_4 z), \; B_2(z)=(1-b_5 z) (1-b_6 z) (1-b_7 z) (1-b_8 z).
\end{equation}
Then one of the $q$-difference equations is written as Eq.(37) in \cite{Y}, i.e. 
\begin{align}
& \frac{B_ 2(t/z)}{t^2 (f-z)} \left[ y(qz) - \frac{t^2(1-gz)}{t^2-gz} y(z) \right]  + \frac{t^2 B_1(q/z)}{q(f q-z)} \left[ y(z/q) - \frac{qt^2-gz}{t^2(q-gz)} y(z) \right] \label{eq:E7lin} \\
& + \frac{(1-t^2)}{g z^2} \left[ \frac{q B_1(g)}{(f g-1)(g z-q)}-\frac{t^4 B_2(g/t)}{(fg-t^2)(gz-t^2)}  \right] y(z)=0. \nonumber
\end{align}
We also obtain the $q$-Painlev\'e equation of type $E_7^{(1)}$ (see Eq.(36) in \cite{Y}) by a compatibility condition of Eq.(\ref{eq:E7lin}) with the deformation equation written as Eq.(38) in \cite{Y}.

By specializing to $f=b_1 $ in Eq.(\ref{eq:E7lin}) and applying a gauge transformation, we have 
\begin{align}
& (z-b_5 t) (z-b_6 t) (z-b_7 t) (z-b_8 t) y(qz) -c(z) y(z) \label{eq:E7linsp} \\
& \qquad +(z-b_1 ) (z-b_2 q ) (z-b_3 q ) (z-b_4 q ) y(z/q)=0, \nonumber
\end{align}
where
\begin{align}
& c(z) = q^{-1/2} \{(1+q) z^4 +c_3 z^3 +c_2 z^2 +c_1 z + (b_5 b_6 b_7 b_8+q^2 b_1 b_2 b_3 b_4) t^2 q   \} ,\\
& c_3= -(b_1+b_2 q+b_3 q+b_4 q+ b_5 t q+ b_6 t q+ b_7 t q +b_8 t q ) , \nonumber \\
& c_1= -q (b_2 b_3 b_4 t^2 q^2 +b_1 b_2 b_4 t^2 q  +b_1 b_3 b_4 t^2 q + b_1 b_2 b_3 t^2 q  \nonumber \\
& \qquad +b_6 b_7 b_8 t +b_5 b_7 b_8 t +b_5 b_6 b_8 t +b_5 b_6 b_7 t) ,\nonumber 
\end{align}
and the term $c_2$ contains the accessory parameter $g$.
There is a relation $q b_1 b_2 b_3 b_4 = b_5 b_6 b_7 b_8$ in Yamada's paper.
Eq.(\ref{eq:E7linsp}) is the second degeneration of Ruijsenaars-van Diejen operator in Eq.(\ref{eq:qsecond}) by setting
\begin{align}
& h_1 =b_1 q^{-1/2}, \; h_2 =b_2 q^{1/2}, \; h_3 =b_3 q^{1/2}, \; h_4 =b_4 q^{1/2},\\
& l_1=b_5 q^{1/2} t, \; l_2=b_6 q^{1/2} t, \; l_3=b_7 q^{1/2} t, \; l_4=b_8 q^{1/2} t . \nonumber
\end{align}
Note that we used the relation $(h_1 h_2 h_3 h_4 l_1 l_2 l_3 l_4)^{1/2} =  qt^2 b_5 b_6 b_7 b_8 = q^2t^2 b_1 b_2 b_3 b_4$, which follows from $q b_1 b_2 b_3 b_4 = b_5 b_6 b_7 b_8$.

\section{Degeneration of Ruijsenaars-van Diejen operator with $N$ variables} \label{sec:degen}

We describe Ruijsenaars-van Diejen operator (\ref{eq:RvDintro}) of $N$ variables and investigate degenerations of the opetator.
By using the notation in section \ref{sec:degenone}, the Ruijsenaars-van Diejen operator of $N$ variables is given by
\begin{equation}
A_{+}(h;z) = \sum _{j=1}^N ( V_{j,+}(h;z)\exp(-ia_{-}\partial_{z_j})+V_{j,+}(h;-z)\exp(ia_{-}\partial_{z_j}) )+V_{b,+}(h;z),
\label{eq:defeRvDN}
\end{equation}
where
\begin{align}
& V_{j,+}(h;z) = \frac{\prod_{n=1}^8 R_{+}(z_j-h_n-ia_{-}/2)}{R_{+}(2z_j+ia_{+}/2)R_{+}(2z_j-ia_{-}+ia_{+}/2)} \cdot \\
& \qquad \cdot \prod_{k\neq j} \frac{R_{+}(z_j-z_k -\mu +ia_{+}/2)R_{+}(z_j+z_k -\mu +ia_{+}/2)}{R_{+}(z_j-z_k +ia_{+}/2)R_{+}(z_j+z_k +ia_{+}/2)}, \nonumber \\
& V_{b,+}(h;z) = \frac{\sum_{t=0}^3p_{t,+}(h)
[\prod _{j=1}^N \cE_{t,+}(\mu ;z_j)-\cE_{t,+}(\mu ;\omega_{t,+})^N]}{2R_{+}(\mu -ia_{+}/2)R_{+}(\mu  -ia_{-}-ia_{+}/2)}.\nonumber 
\end{align}

We set $h_n= \tilde{h}_n -ia_+/2$.
By the limit $q_+ \to 0$, we have the following proposition, whose proof is similar to the case of one variable.
\begin{prop} \label{prop:firstdegN}
As $q_+ \to 0$, the Ruijsenaars-van Diejen operator in Eq.(\ref{eq:defeRvDN}) corresponds to the operator
\begin{equation}
A^{\langle 1 \rangle}  (h;z) = \sum _{j=1}^N ( V^{\langle 1 \rangle} _{j}(h;z)\exp(-ia_{-}\partial_{z_j})+V^{\langle 1 \rangle} _{j}(h;-z)\exp(ia_{-}\partial_{z_j}) )+U^{\langle 1 \rangle} (h;z),
\end{equation}
up to some additive constant, where 
\begin{align}
& V^{\langle 1 \rangle} _{j}(h;z) = \frac{\prod_{n=1}^8 (1-e^{-2\pi i z_j}e^{2\pi i \tilde{h}_n }e^{-\pi a_{-}})}{(1-e^{-4 \pi i z_j })(1-e^{-4\pi i z_j}e^{-2\pi a_{-}})} \prod_{k\neq j} \frac{(1-e^{2\pi i (z_k-z_j)} e^{2\pi i \mu })(1-e^{-2\pi i (z_k+z_j)}e^{2\pi i \mu }) }{(1-e^{2\pi i (z_k-z_j)})(1-e^{-2\pi i (z_k+z_j)}) },
\end{align}
\begin{align}
& U^{\langle 1 \rangle} (h;z) = \label{eq:Vbdeg1N} \\
&  \frac{\prod_{n=1}^8 (e^{2\pi i \tilde{h}_n }- 1)}{2(e^{2\pi i \mu  } -1)(e^{2\pi i \mu }e^{2 \pi a_- } -1) } \prod _{j=1}^N \Big\{ e^{2\pi i \mu  } + \frac{(e^{2\pi i \mu  } -1)(e^{2\pi i \mu }e^{2 \pi a_- } -1)}{(1-e^{2\pi i z_j}e^{\pi a_- })(1-e^{-2\pi i z_j}e^{\pi a_- })} \Big\} \nonumber \\
& + \frac{\prod_{n=1}^8 (e^{2\pi i \tilde{h}_n } + 1) }{2 (e^{2\pi i \mu  } -1)(e^{2\pi i \mu }e^{2 \pi a_- } -1)} \prod _{j=1}^N \Big\{ e^{2\pi i \mu  } + \frac{(e^{2\pi i \mu  } -1)(e^{2\pi i \mu }e^{2 \pi a_- } -1)}{(1+e^{2\pi i z_j}e^{\pi a_- })(1+e^{-2\pi i z_j}e^{\pi a_- })} \Big\} \nonumber \\
& + e^{-\pi a_- } e^{2(N-1)\pi i \mu }\prod_{n=1}^8 e^{\pi i \tilde{h}_n } \Big[ \sum _{n=1}^8  ( e^{2\pi i \tilde{h}_n }+ e^{-2\pi i \tilde{h}_n }) \sum _{j=1}^N  (e^{2\pi i z_j} +  e^{-2\pi i z_j} ) \nonumber \\
&  - (e^{-\pi a_- }+e^{\pi a_- } )\sum _{j=1}^N (e^{4\pi i z_j} +  e^{-4\pi i z_j} ) \nonumber \\
& +(e^{\pi i \mu  }  -e^{-\pi i \mu  }) (e^{\pi i \mu  } e^{\pi a_- } - e^{-\pi i \mu  }e^{-\pi a_- }) \sum _{1\leq j <k \leq N}  (e^{2\pi i z_j} +  e^{-2\pi i z_j} )  (e^{2\pi i z_k} +  e^{-2\pi i z_k} ) \Big] . \nonumber
\end{align}
\end{prop}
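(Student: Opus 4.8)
The plan is to mirror the one-variable computation in the proof of Proposition \ref{prop:firstdeg} term by term, treating the four summands $t=0,1,2,3$ of $V_{b,+}(h;z)$ separately and tracking the two features that are new in the $N$-variable operator: the extra factor $\prod_{k\neq j}(\cdots)$ in the kinetic coefficient $V_{j,+}$, and the replacement of each single $\cE_{t,+}(\mu;z)$ by the $N$-fold product $\prod_{j=1}^N \cE_{t,+}(\mu;z_j)$. First I would dispose of the kinetic part. Using the expansions (\ref{eq:R+expand}) exactly as in the derivation of (\ref{eq:tVhz}), the quotient $R_+(z_j\mp z_k-\mu+ia_+/2)R_+(z_j\pm z_k-\mu+ia_+/2)/(R_+(z_j\mp z_k+ia_+/2)R_+(z_j\pm z_k+ia_+/2))$ has the nonzero limit appearing in $V^{\langle 1\rangle}_j$, so $V_{j,+}(h;z)\to V^{\langle 1\rangle}_j(h;z)$ factor by factor. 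This step is routine and must match \emph{exactly}, since these coefficients multiply shift operators.

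For the potential with $t=0,1$, the functions $p_{t,+}(h)$ and each $\cE_{t,+}(\mu;z_j)$ already have finite limits, computed as in the opening display of the one-variable proof. Hence $p_{t,+}(h)\prod_j\cE_{t,+}(\mu;z_j)$ converges to $\prod_n(e^{2\pi i\tilde h_n}\mp 1)\prod_j\{e^{2\pi i\mu}+(\cdots)/(\cdots)\}$, and dividing by the limit $2(e^{2\pi i\mu}-1)(e^{2\pi i\mu}e^{2\pi a_-}-1)$ of the denominator produces the first two terms of (\ref{eq:Vbdeg1N}). The subtracted quantities $\cE_{t,+}(\mu;\omega_{t,+})^N$ are independent of $z$ and feed only into the additive constant.

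The substantial work is the case $t=2,3$, where $p_{2,+}(h)$ and $p_{3,+}(h)$ carry a factor $q_+^{-2}$. I would expand $\prod_j\cE_{2,+}(\mu;z_j)$ and $\prod_j\cE_{3,+}(\mu;z_j)$ to order $q_+^2$ from the single-variable expansions $\cE_{2,+}(\mu;z_j)=e^{2\pi i\mu}[1-c(z_j)q_+-\cE(\mu;z_j)q_+^2+O(q_+^3)]$ and $\cE_{3,+}(\mu;z_j)=e^{2\pi i\mu}[1+c(z_j)q_+-\cE(\mu;z_j)q_+^2+O(q_+^3)]$, where $c(z_j)=K(e^{2\pi i z_j}+e^{-2\pi i z_j})$ with $K=(e^{\pi i\mu}-e^{-\pi i\mu})(e^{\pi i\mu}e^{\pi a_-}-e^{-\pi i\mu}e^{-\pi a_-})$. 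Multiplying out, the first-order pieces sum to $\pm\sum_j c(z_j)$ and the second-order pieces to $\sum_{j<k}c(z_j)c(z_k)-\sum_j\cE(\mu;z_j)$; the cross terms $\sum_{j<k}c(z_j)c(z_k)$, absent when $N=1$, are the origin of the last line of (\ref{eq:Vbdeg1N}). Forming $p_{2,+}\prod_j\cE_{2,+}+p_{3,+}\prod_j\cE_{3,+}$, the odd orders in $q_+$ cancel and the even orders double, so the $q_+^{-2}$ prefactor promotes the $q_+^2$ coefficient to a finite contribution, leaving a divergent $q_+^{-2}$ constant to be absorbed.

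The delicate point, which is the real obstacle, is the division by the denominator and the verification of coefficients. Here one uses $(1-e^{2\pi i\mu})(1-e^{2\pi i\mu}e^{2\pi a_-})=e^{2\pi i\mu}e^{\pi a_-}K$, so that the limit of $2R_+(\mu-ia_+/2)R_+(\mu-ia_--ia_+/2)$ is $2e^{2\pi i\mu}e^{\pi a_-}K$. Dividing the $t=2,3$ numerator $2e^{2N\pi i\mu}\prod_n e^{\pi i\tilde h_n}[q_+^{-2}+\cdots+\sum_{j<k}c(z_j)c(z_k)-\sum_j\cE(\mu;z_j)]$ by this denominator cancels exactly one power of $K$: since $c(z_j)c(z_k)=K^2(e^{2\pi i z_j}+e^{-2\pi i z_j})(e^{2\pi i z_k}+e^{-2\pi i z_k})$, the cross term emerges with the \emph{single} factor $K$ displayed in (\ref{eq:Vbdeg1N}); the term $-K^{-1}\sum_j\cE(\mu;z_j)$ reproduces $-(e^{\pi a_-}+e^{-\pi a_-})\sum_j(e^{4\pi i z_j}+e^{-4\pi i z_j})$ plus a $\mu$-dependent constant; and the mixed first-order product $\big(\sum_n(e^{2\pi i\tilde h_n}+e^{-2\pi i\tilde h_n})\big)\big(\sum_j c(z_j)\big)$ divided by $K$ reproduces $\sum_n(e^{2\pi i\tilde h_n}+e^{-2\pi i\tilde h_n})\sum_j(e^{2\pi i z_j}+e^{-2\pi i z_j})$, with surviving prefactor $e^{-\pi a_-}e^{2(N-1)\pi i\mu}\prod_n e^{\pi i\tilde h_n}$. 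Collecting the $z$-independent remnants—the $q_+^{-2}$ pole, the $\tilde p(h)$ piece, the constant produced by $-K^{-1}\sum_j\cE(\mu;z_j)$, and the $\cE_{t,+}(\mu;\omega_{t,+})^N$ subtractions—into the single additive constant completes the identification. The main effort is thus this final bookkeeping: checking that the one extra power of $K$ furnished by the denominator is precisely what converts the $K^2$ cross term of the product expansion into the $K$ appearing in the stated operator, and that every term nonconstant in $z$ matches (\ref{eq:Vbdeg1N}).
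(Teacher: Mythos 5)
Your proposal is correct and is essentially the proof the paper intends: the paper gives no separate argument for Proposition \ref{prop:firstdegN}, stating only that the proof "is similar to the case of one variable," and your write-up is exactly that adaptation of the proof of Proposition \ref{prop:firstdeg}, with the two genuinely new points handled correctly — the factorwise limit of the $\prod_{k\neq j}$ interaction terms, and the cross terms $\sum_{j<k}c(z_j)c(z_k)$ in the order-$q_+^2$ expansion of $\prod_j\cE_{t,+}(\mu;z_j)$ for $t=2,3$, whose single surviving factor of $K=(e^{\pi i\mu}-e^{-\pi i\mu})(e^{\pi i\mu}e^{\pi a_-}-e^{-\pi i\mu}e^{-\pi a_-})$ after division by the denominator (via $(1-e^{2\pi i\mu})(1-e^{2\pi i\mu}e^{2\pi a_-})=e^{2\pi i\mu}e^{\pi a_-}K$) produces the last line of (\ref{eq:Vbdeg1N}). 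The only detail left implicit — that the $O(q_+^2)$ correction of the denominator meeting the $q_+^{-2}$ pole of the numerator yields an extra $z$-independent constant — is harmless, since the proposition is stated only up to an additive constant.
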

By applying the gauge transformation with respect to the function $(R_- (z_1) \dots R_- (z_N) )^2$, we arrive at the following operator
\begin{equation}
\tilde{A}^{\langle 1 \rangle}  (h;z) = \sum _{j=1}^N ( \tilde{V}^{\langle 1 \rangle} _{j}(h;z)\exp(-ia_{-}\partial_{z_j})+\tilde{W}^{\langle 1 \rangle} _{j}(h;z)\exp(ia_{-}\partial_{z_j}) )+U^{\langle 1 \rangle} (h;z),
\label{eq:NE8}
\end{equation}
where $U^{\langle 1 \rangle} (h;z) $ was defined in Eq.(\ref{eq:Vbdeg1N}) and 
\begin{align}
& \tilde{V}^{\langle 1 \rangle}_{j}(h;z) = \frac{\prod_{n=1}^8 (1-e^{-2\pi i z_j}e^{2\pi i \tilde{h}_n }e^{-\pi a_{-}})}{e^{-2\pi a_{-}} e^{-4 \pi i z_j }(1-e^{-4 \pi i z_j })(1-e^{-4\pi i z_j}e^{-2\pi a_{-}})} \cdot \\
& \quad \cdot \prod_{k\neq j} \frac{(1-e^{2\pi i (z_k- z_j)}e^{2\pi i \mu })(1-e^{-2\pi i (z_k+z_j )}e^{2\pi i \mu }) }{(1-e^{2\pi i (z_k- z_j)})(1-e^{-2\pi i (z_k+z_j )}) }, \nonumber \\
& \tilde{W}^{\langle 1 \rangle}_{j}(h;z) = \frac{\prod_{n=1}^8 (1-e^{2\pi i z_j}e^{2\pi i \tilde{h}_n }e^{-\pi a_{-}})}{e^{-2\pi a_{-}} e^{4 \pi i z_j }(1-e^{4 \pi i z_j })(1-e^{4\pi i z_j}e^{-2\pi a_{-}})} \cdot \nonumber \\
& \quad \cdot \prod_{k\neq j} \frac{(1-e^{-2\pi i (z_k -z_j)}e^{2\pi i \mu })(1-e^{2\pi i (z_k +z_j)}e^{2\pi i \mu }) }{(1-e^{-2\pi i (z_k -z_j)})(1-e^{2\pi i (z_k +z_j)}) }. \nonumber 
\end{align}
Note that this operator was essentially obtained by van Diejen \cite{vD}.

We apply the second degeneration.
\begin{prop} \label{prop:secdegN}
In Eq.(\ref{eq:NE8}), we replace $z$ by $z+iR$, $\tilde{h}_n $ $(n=1,2,3,4)$ by $h_n +iR$, $\tilde{h}_n $ $(n=5,6,7,8)$ by $h_n -iR$ and take the limit $R\to +\infty $.
Then we have the operator
\begin{equation}
A^{\langle 2 \rangle} (h;z) = \sum _{j=1}^N ( V^{\langle 2 \rangle}_{j}(h;z)\exp(-ia_{-}\partial_{z_j})+W^{\langle 2 \rangle}_{j}(h;z)\exp(ia_{-}\partial_{z_j}) )+U^{\langle 2 \rangle} (h;z),
\end{equation}
where 
\begin{align}
& V^{\langle 2 \rangle}_{j}(h;z) = e^{4\pi i  z_j} e^{2(N-1)\pi i \mu } \prod_{n=1}^4 (1-e^{-2\pi i  z_j}e^{2\pi i h_n } e^{-\pi a_{-}})\prod_{n=5}^8 e^{2\pi i h_n}  \prod_{k\neq j}  \frac{(1-e^{2\pi i (z_k -z_j)}e^{2\pi i \mu })}{(1-e^{2\pi i (z_k -z_j) })}, \\
& W^{\langle 2 \rangle}_{j}(h;z) = e^{2\pi a_{-}} e^{-4\pi i z_j} \prod_{n=5}^8 (1-e^{2\pi i z_j}e^{2\pi i h_n} e^{-\pi a_{-}})  \prod_{k\neq j}  \frac{(1-e^{-2\pi i (z_k -z_j)}e^{2\pi i \mu })}{(1-e^{-2\pi i (z_k -z_j)})}, \nonumber 
\end{align}
\begin{align}
& U^{\langle 2 \rangle} (h;z) =
 e^{2(N-1) \pi i \mu  }\prod_{n=5}^8 e^{2\pi i h_n } \Big[ \Big( \sum _{n=1}^4 e^{2\pi i h_n }+\sum _{n=5}^8 e^{-2\pi i h_n }\Big) e^{-\pi a_- } \sum _{j=1}^N  e^{2\pi i z_j} \nonumber \\
& -(1+e^{-2\pi a_- })\sum _{j=1}^N e^{4\pi i z_j}  +e^{-2 \pi i \mu  } e^{-2\pi a_- } (e^{2\pi i \mu  } -1)(e^{2\pi i \mu }e^{2 \pi a_- } -1) \sum _{1\leq j<k \leq N} e^{2\pi i z_j}e^{2\pi i z_k} \Big] \nonumber \\
& +  e^{2(N-1)\pi i \mu }\prod_{n=1}^8 e^{\pi i h_n } \Big[ \Big( \sum _{n=1}^4  e^{-2\pi i h_n } + \sum _{n=5}^8  e^{2\pi i h_n } \Big) e^{-\pi a_- } \sum _{j=1}^N  e^{-2\pi i z_j} \nonumber \\
&  - (1+e^{-2\pi a_- } )\sum _{j=1}^N  e^{-4\pi i z_j} +e^{-2 \pi i \mu  } e^{-2\pi a_- } (e^{2\pi i \mu  } -1)(e^{2\pi i \mu }e^{2 \pi a_- } -1)  \sum _{1\leq j <k \leq N}  e^{-2\pi i z_j} e^{-2\pi i z_k}  \Big] . \nonumber
\end{align}
\end{prop}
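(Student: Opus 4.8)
The plan is to follow the route of the proof of Proposition \ref{prop:secdeg}, handling the three kinds of terms in Eq.(\ref{eq:NE8}) separately and keeping the notation $a\sim b$ for $\lim_{R\to+\infty}a/b=1$. Thus I would substitute $z_j\mapsto z_j+iR$ for every $j$, together with $\tilde{h}_n\mapsto h_n+iR$ $(n=1,2,3,4)$ and $\tilde{h}_n\mapsto h_n-iR$ $(n=5,6,7,8)$, into $\tilde{V}^{\langle 1\rangle}_j$, $\tilde{W}^{\langle 1\rangle}_j$ and $U^{\langle 1\rangle}$, extract the leading power of $e^{2\pi R}$, and finally multiply by $e^{-4\pi R}$.

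For the coefficient functions the only new feature relative to the one-variable case is the interaction product $\prod_{k\neq j}$. Each $\tilde{V}^{\langle 1\rangle}_j$ splits as a single-site factor, whose limit is exactly the one found in Proposition \ref{prop:secdeg} and which carries the scale $e^{4\pi R}$, times the interaction product. Since all the $z_j$ are shifted by the same $iR$, the differences $z_k-z_j$ are unchanged while the sums $z_k+z_j$ pick up a factor $e^{\pm 4\pi R}$; consequently, in the $V$-product each sum-factor tends to $e^{2\pi i\mu}$, which yields the prefactor $e^{2(N-1)\pi i\mu}$ and the surviving product $\prod_{k\neq j}(1-e^{2\pi i(z_k-z_j)}e^{2\pi i\mu})/(1-e^{2\pi i(z_k-z_j)})$, whereas in the $W$-product the sum-factors tend to $1$ and no such power of $e^{2\pi i\mu}$ appears. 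The interaction products have finite nonzero limits, so after multiplication by $e^{-4\pi R}$ the single-site scale $e^{4\pi R}$ reproduces $V^{\langle 2\rangle}_j$ and $W^{\langle 2\rangle}_j$ exactly.

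The potential $U^{\langle 1\rangle}$ of Eq.(\ref{eq:Vbdeg1N}) is where the real work lies, and naive power counting is deceptive: in this factorised form the $z$-dependence sits inside $\prod_{j=1}^N\{\cdots\}$, each factor of which tends to $e^{2\pi i\mu}$, so the first two terms scale a priori like $e^{8\pi R}$ rather than $e^{4\pi R}$. I would expand both $\prod_{n=1}^8(e^{2\pi i\tilde{h}_n}\mp 1)$ and $\prod_{j=1}^N\{\cdots\}$ to second order in $e^{-2\pi R}$ and verify three points. First, the $e^{8\pi R}$ coefficient is independent of $z$, hence is a constant in the sense of Proposition \ref{prop:firstdegN} and is absorbed into the additive normalisation of the operator. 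Second, the $e^{6\pi R}$ contributions cancel between the two terms; here both the piece coming from the subleading part of $\prod_{n}(e^{2\pi i\tilde{h}_n}\mp 1)$ and the (a priori $z$-dependent) piece coming from the first-order correction of $\prod_j\{\cdots\}$ cancel, because the denominators $(1\mp e^{\pm 2\pi iz_j}e^{\pi a_-})$ enter with opposite signs. Third, the surviving $e^{4\pi R}$ remainder of these two terms assembles into the bracket of $U^{\langle 2\rangle}$ carrying $\prod_{n=5}^8 e^{2\pi ih_n}$ and the powers $e^{2\pi iz_j}$, while the third term of $U^{\langle 1\rangle}$, which already scales like $e^{4\pi R}$, supplies the bracket carrying $\prod_{n=1}^8 e^{\pi ih_n}$ and the powers $e^{-2\pi iz_j}$; in particular the identity $e^{-\pi a_-}(e^{\pi i\mu}-e^{-\pi i\mu})(e^{\pi i\mu}e^{\pi a_-}-e^{-\pi i\mu}e^{-\pi a_-})=e^{-2\pi i\mu}e^{-2\pi a_-}(e^{2\pi i\mu}-1)(e^{2\pi i\mu}e^{2\pi a_-}-1)$ turns the $\sum_{1\leq j<k\leq N}$-term of Eq.(\ref{eq:Vbdeg1N}) into the cross-term of $U^{\langle 2\rangle}$. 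Multiplying by $e^{-4\pi R}$ and collecting then gives the stated $A^{\langle 2\rangle}(h;z)$.

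I expect the combinatorial bookkeeping of the potential to be the main obstacle: one has to carry the expansion of $\prod_j\{\cdots\}$ consistently to second order, match its leading, subleading and sub-subleading parts against those of $\prod_n(e^{2\pi i\tilde{h}_n}\mp 1)$, and confirm both the exact cancellation of the $e^{6\pi R}$ terms and the appearance of the $\sum_{j<k}$ cross-terms with the correct $\mu$- and $a_-$-dependent coefficients. By comparison the treatment of the coefficient functions is a routine adaptation of Proposition \ref{prop:secdeg}.
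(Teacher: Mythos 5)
Your proposal is correct and is essentially the argument the paper intends: Proposition \ref{prop:secdegN} is stated in the paper without proof, as the $N$-variable analogue of Proposition \ref{prop:secdeg}, and your proof is exactly that adaptation, correctly isolating the genuinely new multivariable features (the interaction products contributing $e^{2(N-1)\pi i\mu}$ to $V^{\langle 2\rangle}_j$ and nothing to $W^{\langle 2\rangle}_j$ because the shifted sums $z_k+z_j$ behave oppositely in the two cases, the a priori $e^{8\pi R}$ scaling of the factorised potential, the pairwise cancellation of the $e^{6\pi R}$ terms, and the two distinct origins of the $\sum_{j<k}$ cross-terms, including the identity $e^{-\pi a_-}(e^{\pi i\mu}-e^{-\pi i\mu})(e^{\pi i\mu}e^{\pi a_-}-e^{-\pi i\mu}e^{-\pi a_-})=e^{-2\pi i\mu}e^{-2\pi a_-}(e^{2\pi i\mu}-1)(e^{2\pi i\mu}e^{2\pi a_-}-1)$). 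One small addition to your third point: the surviving $e^{4\pi R}$ part of the first two potential terms contains, besides the bracket of $U^{\langle 2\rangle}$, a $z$-independent constant, namely $e^{2N\pi i\mu}\prod_{n=5}^{8}e^{2\pi i h_n}\cdot\sum_{1\leq n<n'\leq 8}\epsilon_n\epsilon_{n'}\big/\big\{(e^{2\pi i\mu}-1)(e^{2\pi i\mu}e^{2\pi a_-}-1)\big\}$ with $\epsilon_n=e^{2\pi i h_n}$ $(n\leq 4)$ and $\epsilon_n=e^{-2\pi i h_n}$ $(n\geq 5)$, so, exactly like your divergent $e^{8\pi R}$ coefficient, it must be absorbed into the additive normalisation inherited from the convention of Proposition \ref{prop:firstdegN}.
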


Set $l_{n}= -h_{n+4}$ $(n=1,2,3,4) $.
By multiplication and gauge transformation, we have
\begin{equation}
\tilde{A}^{\langle 2 \rangle} (h,l;z) = \sum _{j=1}^N ( \tilde{V}^{\langle 2 \rangle} _{j}(h;z)\exp(-ia_{-}\partial_{z_j})+\tilde{W}^{\langle 2 \rangle} _{j}(l;z)\exp(ia_{-}\partial_{z_j}) )+ \tilde{U}^{\langle 2 \rangle} (h,l;z),
\label{eq:NE7}
\end{equation}
where
\begin{align}
& \tilde{V}^{\langle 2 \rangle} _{j}(h;z) = e^{4\pi i  z_j} \prod_{n=1}^4 (1-e^{2\pi i h_n } e^{-\pi a_{-}}e^{-2\pi i  z_j}) \prod_{k\neq j}  \frac{(1-e^{2\pi i \mu }e^{2\pi i (z_k -z_j) })}{(1-e^{2\pi i (z_k -z_j) })}, \\
& \tilde{W}^{\langle 2 \rangle} _{j}(l;z) = e^{4\pi i z_j} \prod_{n=1}^4 (1- e^{2\pi i l_n} e^{\pi a_{-}} e^{-2\pi i z_j} )  \prod_{k\neq j}  \frac{(1- e^{-2\pi i \mu } e^{2\pi i (z_k -z_j) } )}{(1-e^{2\pi i (z_k -z_j) })}, \nonumber 
\end{align}
\begin{align}
& \tilde{U}^{\langle 2 \rangle} (h,l;z) = \sum _{n=1}^4 ( e^{2\pi i h_n }+ e^{2\pi i l_n }) \sum _{j=1}^N  e^{2\pi i z_j} -(e^{\pi a_- }+e^{-\pi a_- })\sum _{j=1}^N e^{4\pi i z_j} \\
& +e^{-2 \pi i \mu  } e^{-\pi a_- } (e^{2\pi i \mu  } -1)(e^{2\pi i \mu }e^{2 \pi a_- } -1) \sum _{1\leq j<k \leq N} e^{2\pi i z_j}e^{2\pi i z_k} \nonumber \\
& + \prod_{n=1}^4 e^{\pi i ( h_n + l_n )} \Big[ \sum _{n=1}^4 ( e^{-2\pi i h_n } + e^{-2\pi i l_n } ) \sum _{j=1}^N  e^{-2\pi i z_j}  - (e^{\pi a_- }+e^{-\pi a_- } )\sum _{j=1}^N  e^{-4\pi i z_j} \nonumber \\
& +e^{-2 \pi i \mu  } e^{-\pi a_- } (e^{2\pi i \mu  } -1)(e^{2\pi i \mu }e^{2 \pi a_- } -1)  \sum _{1\leq j <k \leq N}  e^{-2\pi i z_j} e^{-2\pi i z_k} \Big] . \nonumber
\end{align}
Note that it seems that this operator was also essentially obtained in \cite{vD}.

We apply the third degeneration.
\begin{prop} \label{prop:thirddegN}
In Eq.(\ref{eq:NE7}), we replace $z$ by $z-iR$, $h_n $ $(n=1,2)$ by $h_n -iR$, $h_n $ $(n=3,4)$ by $h_n +iR$, $l_n $ $(n=1,2,3,4)$ by $l_n -iR$  and take the limit $R\to +\infty $.
Then we have the operator
\begin{equation}
A^{\langle 3 \rangle} (h,l;z) = \sum _{j=1}^N ( V^{\langle 3 \rangle}_{j}(h;z)\exp(-ia_{-}\partial_{z_j})+W^{\langle 3 \rangle}_{j}(l;z)\exp(ia_{-}\partial_{z_j}) )+U^{\langle 3 \rangle}(h,l;z),
\end{equation}
where
\begin{align}
& V^{\langle 3 \rangle}_{j}(h;z) = e^{4\pi i  z_j} \prod_{n=1}^2 (1-e^{2\pi i h_n } e^{-\pi a_{-}}e^{-2\pi i  z_j}) \prod_{k\neq j}  \frac{(1-e^{2\pi i \mu }e^{2\pi i (z_k -z_j) })}{(1-e^{2\pi i (z_k -z_j)})}, \\
& W^{\langle 3 \rangle}_{j}(l;z) =e^{4\pi i z_j} \prod_{n=1}^4 (1- e^{2\pi i l_n} e^{\pi a_{-}} e^{-2\pi i z_j} )  \prod_{k\neq j}  \frac{(1- e^{-2\pi i \mu } e^{2\pi i (z_k -z_j) } )}{(1-e^{2\pi i ( z_k-z_j) })}, \nonumber 
\end{align}
\begin{align}
& U^{\langle 3 \rangle} (h,l;z) =\Big( \sum _{n=1}^2 e^{2\pi i h_n }+\sum _{n=1}^4 e^{2\pi i l_n }\Big) \sum _{j=1}^N  e^{2\pi i z_j} -(e^{\pi a_- }+e^{-\pi a_- })\sum _{j=1}^N e^{4\pi i z_j} \\
& \qquad +e^{-2 \pi i \mu  } e^{-\pi a_- } (e^{2\pi i \mu  } -1)(e^{2\pi i \mu }e^{2 \pi a_- } -1) \sum _{1\leq j<k \leq N} e^{2\pi i z_j}e^{2\pi i z_k} \nonumber \\
& \qquad +  e^{\pi i h_1}  e^{\pi i h_2}  ( e^{\pi i (h_3 - h_4 )} + e^{\pi i (h_4 - h_3 )})  \prod_{n=1}^4 e^{\pi i l_n} \sum _{j=1}^N  e^{-2\pi i z_j} .\nonumber
\end{align}
\end{prop}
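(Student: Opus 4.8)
The plan is to follow the one-variable computation in the proof of Proposition~\ref{prop:thirddeg} term by term, treating each coefficient of the operator $\tilde{A}^{\langle 2\rangle}(h,l;z)$ in Eq.(\ref{eq:NE7}) in turn and extracting its leading asymptotics as $R\to +\infty$ under the substitution $z\mapsto z-iR$, $h\mapsto h-iRv_1$, $l\mapsto l-iRv_2$. I would reuse the notation $a\sim b$ for $\lim_{R\to +\infty}a/b=1$ introduced there. The statement to establish is that, after multiplying by $e^{-4\pi R}$, each of $\tilde{V}^{\langle 2\rangle}_j$, $\tilde{W}^{\langle 2\rangle}_j$, and $\tilde{U}^{\langle 2\rangle}$ converges to the corresponding piece $V^{\langle 3\rangle}_j$, $W^{\langle 3\rangle}_j$, $U^{\langle 3\rangle}$ of $A^{\langle 3\rangle}(h,l;z)$.

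The essential point that reduces the $N$-variable claim to the one-variable one is that every two-body factor $\prod_{k\neq j}(\cdots)$ in $\tilde{V}^{\langle 2\rangle}_j$ and $\tilde{W}^{\langle 2\rangle}_j$ depends on $z$ only through the differences $z_k-z_j$. Since the substitution shifts all components of $z$ by the same amount $-iR$, these differences are invariant, so each interaction product is unchanged and simply persists in the limit. Therefore the analysis of the diagonal prefactors is identical to the one-variable case: in $\tilde{V}^{\langle 2\rangle}_j$ the factors $(1-e^{2\pi i h_n}e^{-\pi a_-}e^{-2\pi i z_j})$ with $n=1,2$ are invariant because the $+2\pi R$ from $h_n$ cancels the $-2\pi R$ from $z_j$, whereas those with $n=3,4$ tend to $1$ as their combined exponent tends to $-\infty$; together with $e^{4\pi i z_j}\sim e^{4\pi R}e^{4\pi i z_j}$ this gives $e^{-4\pi R}\tilde{V}^{\langle 2\rangle}_j\to V^{\langle 3\rangle}_j$. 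In $\tilde{W}^{\langle 2\rangle}_j$ all four factors $(1-e^{2\pi i l_n}e^{\pi a_-}e^{-2\pi i z_j})$ are invariant (here all $l_n$ and all $z_j$ shift by $-iR$), so $e^{-4\pi R}\tilde{W}^{\langle 2\rangle}_j\to W^{\langle 3\rangle}_j$.

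For the potential I would bookkeep the power of $R$ carried by each monomial. The positive single-sum term, the $e^{4\pi i z_j}$ term, and the two-body $\sum_{j<k}e^{2\pi i z_j}e^{2\pi i z_k}$ term are all of order $e^{4\pi R}$, with the $h_3,h_4$ contributions to the first sum dropping to lower order; the two-body coefficient $e^{-2\pi i\mu}e^{-\pi a_-}(e^{2\pi i\mu}-1)(e^{2\pi i\mu}e^{2\pi a_-}-1)$ is $R$-independent and thus survives unchanged. The only slightly delicate term is the last one: its prefactor $\prod_{n=1}^4 e^{\pi i(h_n+l_n)}$ grows like $e^{4\pi R}$, while the surviving, order-$1$ part of the bracket is $\sum_{n=3}^4 e^{-2\pi i h_n}\sum_{j}e^{-2\pi i z_j}$ (the remaining pieces, carrying $e^{-4\pi R}$, vanish), and the product collapses by
\[
\prod_{n=1}^4 e^{\pi i h_n}\bigl(e^{-2\pi i h_3}+e^{-2\pi i h_4}\bigr)=e^{\pi i h_1}e^{\pi i h_2}\bigl(e^{\pi i(h_3-h_4)}+e^{\pi i(h_4-h_3)}\bigr),
\]
which is exactly the coefficient appearing in $U^{\langle 3\rangle}$. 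Multiplying the sum of these contributions by $e^{-4\pi R}$ yields $U^{\langle 3\rangle}(h,l;z)$.

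I expect no genuinely new obstacle beyond this asymptotic bookkeeping: the multivariable interaction factors, which might at first look like the hard part, cause no difficulty precisely because of their invariance under the uniform shift $z_j\mapsto z_j-iR$, so the whole argument runs parallel to the one-variable Proposition~\ref{prop:thirddeg}. The main care needed is simply to confirm that, after the normalisation by $e^{-4\pi R}$, exactly the order-$e^{4\pi R}$ monomials survive with the stated coefficients and every other term is of strictly lower order in $e^{\pi R}$.
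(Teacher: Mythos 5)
Your proposal is correct and is exactly the argument the paper intends: the paper states Proposition \ref{prop:thirddegN} without proof, implicitly appealing to the same computation as the one-variable Proposition \ref{prop:thirddeg}, and your reduction — noting that the interaction factors depend only on the differences $z_k-z_j$ and are therefore invariant under the uniform shift, so that the diagonal factors and the potential can be treated exactly as in the one-variable case after normalising by $e^{-4\pi R}$ — is precisely that computation, with the surviving terms and the collapse identity for the last coefficient correctly identified.
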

By applying a gauge transformation, we have 
\begin{equation}
\tilde{A}^{\langle 3 \rangle} (h,l;z) = \sum _{j=1}^N ( \tilde{V}^{\langle 3 \rangle}_{j}(h;z)\exp(-ia_{-}\partial_{z_j})+ \tilde{W}^{\langle 3 \rangle}_{j}(l;z)\exp(ia_{-}\partial_{z_j}) )+ U^{\langle 3 \rangle} (h,l;z),
\label{eq:NE6}
\end{equation}
where
\begin{align}
& \tilde{V}^{\langle 3 \rangle} _{j}(h;z) = e^{-2\pi a_{-}} \prod_{n=1}^2 (1-e^{2\pi i h_n } e^{-\pi a_{-}}e^{-2\pi i  z_j}) \prod_{k\neq j}  \frac{(1-e^{2\pi i \mu }e^{2\pi i (z_k -z_j) })}{(1-e^{2\pi i (z_k -z_j) })}, \\
& \tilde{W}^{\langle 3 \rangle}_{j}(l;z) = e^{-2\pi a_{-}} e^{8\pi i z_j}\prod_{n=1}^4 (1- e^{2\pi i l_n} e^{\pi a_{-}} e^{-2\pi i z_j} )  \prod_{k\neq j}  \frac{(1- e^{-2\pi i \mu } e^{2\pi i (z_k -z_j)} )}{(1-e^{2\pi i (z_k -z_j) })}. \nonumber 
\end{align}

We apply the fourth degeneration.
\begin{prop} \label{prop:fourthdegN}
In Eq.(\ref{eq:NE6}), we replace $z$ by $z+iR$, $h_n $ $(n=1,2,3,4)$ by $h_n +iR$, $l_n $ $(n=1,2)$ by $l_n +iR$, $l_n $ $(n=3,4)$ by $l_n -iR$  and take the limit $R\to +\infty $.
Then we have the operator
\begin{equation}
A^{\langle 4 \rangle} (h,l;z) \equiv \sum _{j=1}^N ( V^{\langle 4 \rangle} _{j}(h;z)\exp(-ia_{-}\partial_{z_j})+W^{\langle 4 \rangle}_{j}(l;z)\exp(ia_{-}\partial_{z_j}) )+U^{\langle 4 \rangle}(h,l;z),
\end{equation}
where
\begin{align}
& V^{\langle 4 \rangle}_{j}(h;z) = e^{-2\pi a_{-}} \prod_{n=1}^2 (1-e^{2\pi i h_n } e^{-\pi a_{-}}e^{-2\pi i  z_j}) \prod_{k\neq j}  \frac{(1-e^{2\pi i \mu }e^{2\pi i (z_k -z_j) })}{(1-e^{2\pi i (z_k -z_j) })}, \\
& W^{\langle 4 \rangle}_{j}(l;z) =  e^{4\pi i z_j} e^{2\pi i (l_3 +l_4 )} \prod_{n=1}^2 (1- e^{2\pi i l_n} e^{\pi a_{-}} e^{-2\pi i z_j} )  \prod_{k\neq j}  \frac{(1- e^{-2\pi i \mu } e^{2\pi i (z_k -z_j)} )}{(1-e^{2\pi i (z_k -z_j) })}, \nonumber \\
& U^{\langle 4 \rangle}(h,l;z) = \nonumber \\
& \quad  (e^{2\pi i l_3 } + e^{2\pi i l_4 } ) \sum _{j=1}^N  e^{2\pi i z_j} +  e^{\pi i (h_1+ h_2)}  ( e^{\pi i (h_3 - h_4 )} + e^{\pi i (h_4 - h_3 )})  \prod_{n=1}^4 e^{\pi i l_n} \sum _{j=1}^N  e^{-2\pi i z_j} . \nonumber
\end{align}
\end{prop}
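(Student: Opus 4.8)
The plan is to mirror, factor by factor, the one-variable computation in the proof of Proposition~\ref{prop:fourthdeg}, isolating the one feature special to $N$ variables: the interaction products $\prod_{k\neq j}$ appearing in the shift coefficients $\tilde V^{\langle 3\rangle}_j$ and $\tilde W^{\langle 3\rangle}_j$ of Eq.(\ref{eq:NE6}). First I would record the crucial observation that each such product depends on the variables only through the differences $z_k-z_j$. Since the prescribed substitution replaces every $z_j$ by $z_j+iR$ with the \emph{same} shift, each difference $z_k-z_j$ is unchanged, so the factors $\prod_{k\neq j}(1-e^{2\pi i\mu}e^{2\pi i(z_k-z_j)})/(1-e^{2\pi i(z_k-z_j)})$ and their $\mu\mapsto-\mu$ analogue in $\tilde W^{\langle 3\rangle}_j$ are left completely invariant by the substitution and by the limit $R\to+\infty$. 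This reduces the whole problem to the single-variable factors, which are precisely those already treated in Proposition~\ref{prop:fourthdeg}.

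With this reduction in hand I would treat the two shift coefficients separately. The single-variable factor of $\tilde V^{\langle 3\rangle}_j$, namely $e^{-2\pi a_-}\prod_{n=1}^2(1-e^{2\pi i h_n}e^{-\pi a_-}e^{-2\pi i z_j})$, is identical to the one-variable $\tilde V^{\langle 3\rangle}$, and under $z_j\mapsto z_j+iR$, $h_n\mapsto h_n+iR$ the combination $e^{2\pi i h_n}e^{-2\pi i z_j}$ is invariant; multiplying by the invariant interaction product gives $V^{\langle 4\rangle}_j$ exactly. For $\tilde W^{\langle 3\rangle}_j$ the factors of $\prod_{n=1}^4(1-e^{2\pi i l_n}e^{\pi a_-}e^{-2\pi i z_j})$ with $n=1,2$ remain fixed, while those with $n=3,4$ (where $l_n\mapsto l_n-iR$) contribute a dominant growth $e^{4\pi R}$ that combines with the $e^{-8\pi R}$ from the prefactor $e^{-2\pi a_-}e^{8\pi i z_j}$ to produce the finite limit $e^{4\pi i z_j}e^{2\pi i(l_3+l_4)}\prod_{n=1}^2(1-e^{2\pi i l_n}e^{\pi a_-}e^{-2\pi i z_j})$, exactly as in the one-variable case; multiplying by the invariant interaction product yields $W^{\langle 4\rangle}_j$.

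For the potential I would expand $U^{\langle 3\rangle}(h,l;z)$ term by term and track the powers of $e^{2\pi R}$. In the linear term $(\sum_{n=1}^2 e^{2\pi i h_n}+\sum_{n=1}^4 e^{2\pi i l_n})\sum_j e^{2\pi i z_j}$ only the part $\sum_{n=3}^4 e^{2\pi i l_n}$ grows (like $e^{2\pi R}$), and it cancels the decay $e^{-2\pi R}$ carried by $\sum_j e^{2\pi i z_j}$, leaving $(e^{2\pi i l_3}+e^{2\pi i l_4})\sum_j e^{2\pi i z_j}$. In the last summand of $U^{\langle 3\rangle}$ the factor $\prod_{n=1}^4 e^{\pi i l_n}$ is invariant (the $e^{-2\pi R}$ from $l_1,l_2$ cancels the $e^{2\pi R}$ from $l_3,l_4$) and $e^{\pi i(h_3-h_4)}+e^{\pi i(h_4-h_3)}$ is invariant, while $e^{\pi i(h_1+h_2)}$ contributes $e^{-2\pi R}$, which is balanced by the $e^{2\pi R}$ from $\sum_j e^{-2\pi i z_j}$, reproducing the second summand of $U^{\langle 4\rangle}$. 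All other contributions vanish: the term $-(e^{\pi a_-}+e^{-\pi a_-})\sum_j e^{4\pi i z_j}$ carries $e^{-4\pi R}$, and the quadratic term $\sum_{1\le j<k\le N}e^{2\pi i z_j}e^{2\pi i z_k}$, whose coefficient is independent of $R$, carries $e^{-4\pi R}$ from its two exponential factors. Collecting the surviving pieces gives $U^{\langle 4\rangle}(h,l;z)$.

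I expect the only genuinely new point, relative to the one-variable argument, to be the quadratic potential term $\sum_{1\le j<k\le N}e^{2\pi i z_j}e^{2\pi i z_k}$, which has no counterpart in Proposition~\ref{prop:fourthdeg}; the single thing to verify is that it decays, which the exponent bookkeeping above confirms. Everything else is the one-variable computation transported factor by factor, once the invariance of the interaction products under the common shift $z_j\mapsto z_j+iR$ has been noted.
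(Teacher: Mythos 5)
Your proposal is correct and is essentially the argument the paper intends: the paper states Proposition~\ref{prop:fourthdegN} without proof, implicitly relying on the one-variable computation of Proposition~\ref{prop:fourthdeg}, and your reduction — noting that the interaction products $\prod_{k\neq j}$ depend only on the differences $z_k-z_j$ and are therefore invariant under the uniform shift $z_j\mapsto z_j+iR$, after which the exponent bookkeeping for the single-variable factors and for $U^{\langle 3\rangle}$ (including the decay of the new quadratic term $\sum_{j<k}e^{2\pi i z_j}e^{2\pi i z_k}$) goes through exactly as in the one-variable case — is precisely that argument made explicit.
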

By a multiplication and a gauge transformation we have 
\begin{align}
\tilde{A}^{\langle 4 \rangle} (h,l;z) = \sum _{j=1}^N ( \tilde{V}^{\langle 4 \rangle}_{j}(h;z)\exp(-ia_{-}\partial_{z_j})+\tilde{W}^{\langle 4 \rangle}_{j}(l;z)\exp(ia_{-}\partial_{z_j}) )-U^{\langle 4 \rangle}(h,l;z),
\label{eq:ND5}
\end{align}
where
\begin{align}
& \tilde{V}^{\langle 4 \rangle}_{j}(h;z) = e^{2\pi i z_j} \prod_{n=1}^2 (1-e^{2\pi i h_n } e^{-\pi a_{-}}e^{-2\pi i  z_j}) \prod_{k\neq j} \frac{(1-e^{2\pi i \mu }e^{2\pi i (z_k -z_j) })}{(1-e^{2\pi i (z_k -z_j) })}, \\
& \tilde{W}^{\langle 4 \rangle}_{j}(l;z) = e^{2\pi i (l_3 +l_4 )} e^{2 \pi i z_j} \prod_{n=1}^2 (1- e^{2\pi i l_n} e^{\pi a_{-}} e^{-2\pi i z_j} ) \prod_{k\neq j}  \frac{(1- e^{-2\pi i \mu } e^{2\pi i (z_k -z_j)} )}{(1-e^{2\pi i (z_k -z_j)})}. \nonumber 
\end{align}

\section{Discussion} \label{sec:discuss}

We have found out that the degenerated Ruijsenaars-van Diejen operators of one variable appear by specializations of the linear $q$-difference equations related with the $q$-Painlev\'e equations of types $D_5^{(1)}$, $E_6^{(1)}$ and $E_7^{(1)}$.
Our results should be extended to the case of the $q$-Painlev\'e equation of type $E_8^{(1)}$ and the elliptic-difference Painlev\'e equation.
Note that Yamada and his collaborators found Lax pairs of the $q$-Painlev\'e equations of type $E_8^{(1)}$ and the elliptic-difference Painlev\'e equation \cite{Ye,Y,NTY}.
On Lax pairs of the elliptic-difference Painlev\'e equation, see also the papers by Rains and Ormerod \cite{Rai,OR}.

We propose other related problems.
Komori and Hikami proved existence of the commuting operators for the multivariable Ruijsenaars-van Diejen operator \cite{KH}.
The commuting operators of the multivariable degenerate operators should be clarified.

It is known that the Ruijsenaars-van Diejen operator of one variable admits $E_8$ symmetry \cite{Rui15}.
A kernel function plays important roles in \cite{Rui15}, because the Hilbert-Schmidt operator of the kernel function is used to built up Hilbert space features and is also used to establish the invariance of the discrete spectra under the $E_8$ Weyl group.  
The symmetry of the degenerate operators should also be studied well.
In particular, the kernel functions for the degenerate operators should be established.

\section*{Acknowledgements}
The author is grateful to Simon Ruijsenaars for valuable comments and fruitful discussions.
He would like to thank the university of Leeds, where most parts of this papar were accomplished.
He was supported by JSPS KAKENHI Grant Number JP26400122 and by Chuo University Overseas Research Program.

\end{document}